\title{Tractability results for the Double-Cut-and-Join circular
  median problem}
\author{
  Ahmad Mahmoody-Ghaidary \inst{1,2}
  \and
  Cedric Chauve \inst{1}
  \and
  Ladislav Stacho \inst{1}
}
\institute{Department of Mathematics, Simon Fraser University, Burnaby
  (BC), Canada
\and Department of Computer Science, Brown University, Providence (RI), USA}
\begin{document}

\maketitle

%% ----------------------------------------------------------------

\begin{abstract}
  The circular median problem in the Dou\-ble-Cut-and-Join (DCJ)
  distance asks to find, for three given genomes, a fourth circular
  genome that minimizes the sum of the mutual distances with the three
  other ones. This problem has been shown to be NP-complete. We show
  here that, if the number of vertices of degree $3$ in the breakpoint
  graph of the three input genomes is fixed, then the problem is
  tractable\footnote{Version of \today. This paper is currently under
    peer-review. The results appeared in {\em Ahmad Mahmoody-Ghaidary,
      Tractability Results for the Double-Cut and Join
      Multichromosomal Problem, MSc thesis, Department of Mathematics,
      Simon Fraser University, 2011}.}.
\end{abstract}

%%%%%%%%%%%%%%%%%%%%%%%%%%%%%%%%%%%%%%%%%%%%%%%%%%%%%%%%%%
\section{Introduction} \label{sec:intro}

Comparative genomics has been an important source of combinatorial and
algorithmic questions during the last 20 years, especially the
computation of genomic distances and ancestral genomes, as illustrated
by the recent book of Fertin {\em et al.}  \cite{Fertin2009}. Among
these problems, the {\em median} problem is of particular interest:
while the distance problem is tractable in many models, the median
problem is its simplest natural extension (a distance is a function of
two genomes, while the median score is a function of three genomes)
and is computationally intractable in most models. Computing median is
at the heart of inferring gene order phylogenies and ancestral gene
orders~\cite{Murphy2005,Lin2010,Xu2011}. This motivated research on
tractability issues of genomic median problems, well summarized in the
recent paper~\cite{Tannier2009}, as well as on practical algorithms to
address it (see~\cite{Xu2008,Zhang2009,Xu2009a,Xu2009b} and references
there).

Roughly speaking, the median problem is as follows: given three
genomes $G_1$, $G_2$ and $G_3$ and a genomic distance model $d$, find
a genome $M$ that minimizes the \emph{cost} of $d$ over $G_1, G_2,
G_3$ defined by $d(M,G_1)+d(M,G_2)+d(M,G_3)$. It is in fact an
ancestral genome reconstruction problem, as $M$ can be seen as the
last-common ancestor of $G_1$ and $G_2$, with $G_3$ acting as {\em
  outgroup} (i.e. a genome whose last common ancestor with $G_1$ and
$G_2$ is an ancestor of $M$). In~\cite{Tannier2009}, Tannier {\em et
  al.} explored several variants, based on different models of genomes
(linear, circular or mixed, see Section~\ref{sec:prelim}) and of
genomic distances (Breakpoint, Double-Cut-and-Join, Reversals,
\dots). In particular, they showed that if $d$ is the
Double-Cut-and-Join (DCJ) distance, which is currently the most widely
used genomic distance, then computing a circular or mixed median is
NP-complete. In fact, the only known tractable median problem is the
mixed breakpoint median: $d$ is the breakpoint distance and the median
can contain both linear and circular chromosomes. From a combinatorial
point of view, the central object in the DCJ model is the {\em
  breakpoint graph}: the DCJ distance between two genomes with $n$
genes is indeed easily obtained from the number of cycles and paths
containing odd number of vertices (odd paths) in this
graph~\cite{Yancopoulos2005,Bergeron2006}. Recent progress in
understanding properties of this graph, and especially of the family
of {\em adequat subgraphs}, lead Xu to introduce algorithms to compute
DCJ median genomes which are efficient on real data, but do not define
well characterized classes of tractable
instances~\cite{Xu2008,Xu2009a,Xu2009b,Xu2011}.

In the present work, we show the following result: if the breakpoint
graph of three genomes contains a constant number of vertices of
degree 3, then computing a DCJ circular median is tractable. To the
best of our knowledge, this is the first result defining an explicit
non-trivial class of tractable instances related to the DCJ median
problem. In Section~\ref{sec:prelim}, we define precisely
combinatorial representations of genomes, the DCJ distance, breakpoint
graphs and the problem we addressed here. In
Section~\ref{sec:results}, we state and prove our main result.
%% Moreover, aside of its theoretical interest, this result has
%% practical implications, as we illustrate in
%% Section~\ref{sec:algorithm}, where we design a heuristic that relies
%% removing a subset of edges to decrease their number and then applying
%% our algorithm, and use this algorithm to study a vertebrate genome
%% dataset.

%% ; this then gives a lower bound to the cost of a circular
%% median, which can be used in a branch-and-bound algorithm, or even in
%% the framework developed by Xu~\cite{Xu2008,Xu2009a,Xu2009b}.

%%%%%%%%%%%%%%%%%%%%%%%%%%%%%%%%%%%%%%%%%%%%%%%%%%%%%%%%%%
\section{Preliminaries} \label{sec:prelim}

\paragraph{Genes, genomes and breakpoint graph}
Let $A = \{1, 2, \ldots, n\}$ represent a set of $n$ {\em
  genes}\footnote{The term {\em gene} is used here in a generic way,
  and might include other genomic markers such as synteny/orthology
  blocks for example.}. Each gene $i$ has a {\em head} $i_h$ and a
{\em tail} $i_t$. From now, we assume $A$ always contains $n$ genes.

A {\em genome} $G$, with gene set $A$, is encoded by the order and
orientation of its genes along its chromosomes (i.e. its {\em gene
  order}), or equivalently by the set of the adjacencies between its
gene extremities, that can naturally be represented by a {\em
  matching} on the set of vertices $V(G) = \{i_h, i_t | 1\leq i \leq n
\}$ (Fig.~\ref{genomebreak} (a)). The connected components of the
graph whose vertices are $V(G)$ and edges are the disjoint union of
the edges of $G$ and the edges $\{i_t, i_h\}$ (forcing gene
extremities for a given gene to be contiguous) form the {\em
  chromosomes} of $G$ (Fig.~\ref{genomebreak} (b)). A chromosome is
\emph{linear} if it is a path and \emph{circular} if it is a
cycle. $G$ is {\em circular} if it contains only circular chromosomes
(perfect matching), {\em linear} if it contains only linear
chromosomes, and {\em mixed} otherwise.  Fig.~\ref{genomebreak}(a,b)
illustrates this view of genomes as matchings.

The {\em breakpoint graph} $B(G_1, \ldots, G_m)$ of $m$ genomes $G_1,
\ldots, G_m$ on $A$ is the disjoint union of these genomes, i.e. the
graph with vertex set $V(A)$ and edges given by the matchings defining
these $m$ genomes. Following the usual convention, we consider that
edges in this graph are colored, with color $c_i$ assigned to genome
$i$ ($1 \leq i \leq m$); it results that $B(G_1, \ldots, G_m)$ can
have multiple edges of different colors (see Fig.~\ref{genomebreak}
(c)). 

\begin{figure}
\begin{center}
  \includegraphics[scale=0.45]{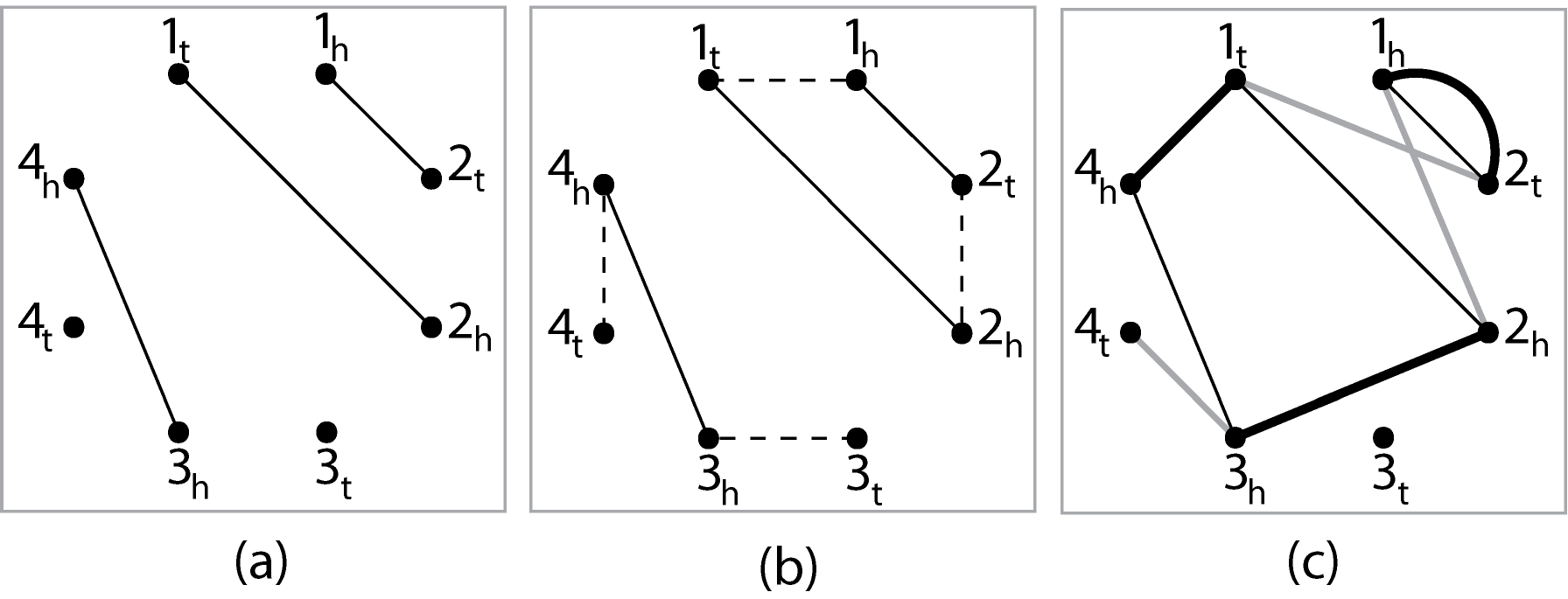}
 \vspace*{-3mm}\caption{(a) A genome on 4 genes, with two chromosomes,
   one circular chromosome with gene order $(1\ 2)$ and one linear
   chromosome with gene order $(4\ -3)$, where the sign $-$ indicates
   a reverse orientation. (b) The same genomes with added dashed edges
   connecting gene extremities: every connected component of the
   resulting graph is a chromosome. (c) The breakpoint graph of three
   genomes (whose edges are respectively light gray, thin black and
   thick black) on 4 genes.}
    \label{genomebreak}
\end{center}

\end{figure}

\paragraph{DCJ distance and median}
Given two genomes $G$ and $M$ on $A$, with $M$ being a circular
genome, the {\em DCJ distance} $d_{\text{DCJ}}(G,M)$ is given by
\begin{equation}\label{distancefunc}
d_{DCJ}(G, M) = n - c(G, M),
\end{equation}
where $c(G,M)$ is the number of cycles in $B(G,M)$. So larger $c(G,
M)$ implies smaller distance $d_{DCJ}(G, M)$. The general definition
of the DCJ distance (when both $G$ and $M$ are mixed genomes) also
requires to consider the odd paths\footnote{ An \emph{odd}
  (resp.\emph{even}) subgraph is a subgraph with an odd (resp. even)
  number of vertices.} of the breakpoint graph~\cite{Bergeron2006},
but it is easy to see that the breakpoint graph does not contain odd
path if at least one genome is circular. Note that the edges on a
cycle in $B(G, M)$ are alternatively from $M$ and $G$. An \emph{$(G,
  M)$-alternating cycle} is an even cycle with edges in $M$ and $G$
alternatively. For simplicity, we may sometimes only call such cycles
\emph{alternating cycles}.

A {\em DCJ circular median} for three genomes $G_1$, $G_2$, and $G_3$,
or alternatively for their breakpoint graph $B(G_1,G_2,G_3)$, is a
circular genome $M$ which minimizes 
$$\sum\limits_{i = 1}^{3}d_{DCJ}(G_i, M)=3n-\sum\limits_{i =
  1}^{3}c(G_i, M)$$. So a circular genome $M$ which maximizes the the
total number of $(M, G_i)$-alternating cycles (for an $i \in \{1, 2,
3\}$ ) is a DCJ circular median.

\paragraph{Terminology}
From now, by \emph{median} we always mean \emph{DCJ circular median}.
We denote also by $m(B)$ the sum $d_{DCJ}(G_1, M)+d_{DCJ}(G_2,
M)+d_{DCJ}(G_3, M)$ for a median $M$.

Let $B = B(G_1, G_2, G_3)$ be a breakpoint graph, and let $M$ be a
median of $B$.  The graph $B_M(G_1, G_2, G_3)=B \cup M$ (also denoted
by $B_M$ when the context is clear) is called the \emph{median graph}
of $B$ with the DCJ circular median genome $M$ (using disjoint union).
The edges in $G_1 \cup G_2 \cup G_3$ are called {\em colored
  edges}\index{colored edge}, and edges in $M$ are called {\em median
  edges}\index{median!edge}. 

A $k$-cycle in $B_M$ is an $(M,G_i)$-alternating cycle of length $k$,
for some $i$, in $B_M$. We denote the total number of alternating
cycles for a median graph $M$ of a breakpoint graph $B$ by
$\text{cyc}(B)$\footnote{Note that $\text{cyc}(B)$ does not depend
  only on the topology of $B$, but also on the colors of its
  edges. Moreover, for different medians $M$ of $B$, $B_M$ has the
  same number of alternating cycles, so $\text{cyc}(B)$ does not
  depend of a particular median.}.  If $H$ is a subgraph of $B$, then
$\text{cyc}(H)$ is the maximum number of alternating cycles composed
of edges in $H$, taken over all matchings in $H$.

A terminal vertex in a graph is a vertex of degree $1$. A subgraph of
$B$ is said to be {\em isomorphic to $C_k$ (resp. $P_k$)} if it is a
cycle (resp. path) on $k$ vertices.

\begin{remark}
The problem we consider in the present work is to compute a DCJ
circular median of three given genomes, or equivalently to find a
matching in $B$ that maximizes the number of alternating cycles. From
this point of view this is a purely graph theoretical problem that can
be extended naturally to any edge-colored graph, with the convention
that if the graph has an odd number of vertices, then exactly one
vertex does not belong to the matching.
\end{remark}

\paragraph{Shrinking in a breakpoint graph}
{\em Shrinking} a pair of vertices $\{u, v\}$ or an edge with end
vertices $u$ and $v$ was defined in~\cite{Xu2008}. It consists of
three steps: {\em (1)} removing all edges between $u$ and $v$ (if
there is any), {\em (2)} identifying the remaining edges incident to
both $u$ and $v$ and with same color, {\em (3)} removing $u$ and
$v$. We denote the resulting graph by $B\cdot\{u,v\}$
(Fig.~\ref{shrink}).

%\vspace{-4cm}
\begin{figure}{
    \begin{center}
      \includegraphics[scale=0.5]{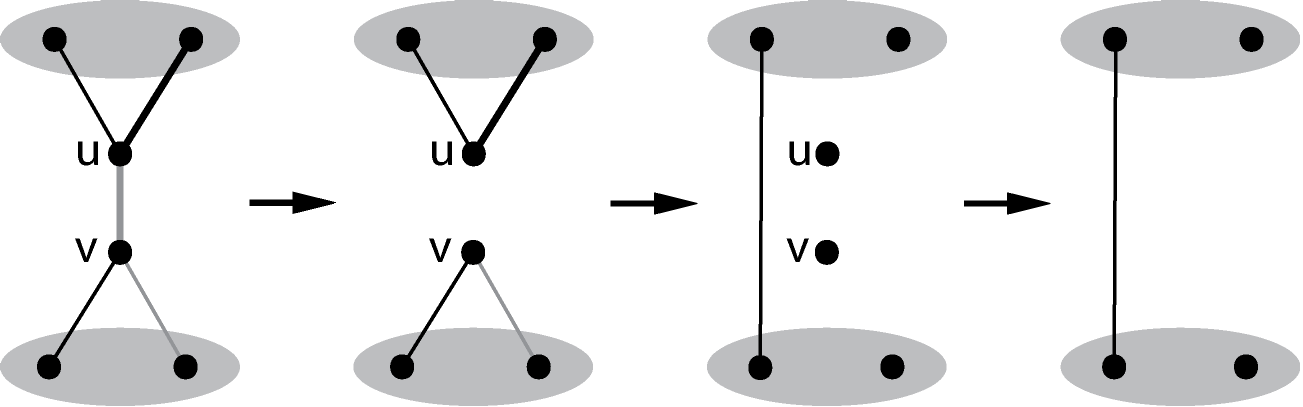}
      \caption{Illustration of the shrinking of a pair $\{u, v\}$ of
        vertices of a breakpoint graph.}
      \label{shrink}
    \end{center}
}\end{figure}

%{\noindent\bf Ahmad: was it a result stated in some paper of Xu. If
 % yes, we should cite it. If no, we should also state it is new, which
 % justifies to give a proof.}

\begin{proposition} \label{Shrink}
  Let $B$ be the breakpoint graph of genomes $G_1, \ldots, G_m$, and
  $u, v \in V(B)$. Suppose that there are $k$ colored edges between
  $u$ and $v$. If there exists a median $M$ containing the edge $uv$,
  then $\text{cyc}(B) = \text{cyc}(B\cdot\{u, v\})+k$.
\end{proposition}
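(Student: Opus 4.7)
My plan is to prove the equality by two matched inequalities that exploit an essentially bijective correspondence between alternating cycles in $(B, M)$ with $uv\in M$ and alternating cycles in $(B\cdot\{u,v\}, M\setminus\{uv\})$, together with $k$ easy alternating $2$-cycles involving $uv$.

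For the direction $\text{cyc}(B) \geq \text{cyc}(B\cdot\{u,v\}) + k$, which does not need the hypothesis, I would take an optimal matching $M^*$ on $B\cdot\{u,v\}$, view it as a matching on $V(B)\setminus\{u,v\}$, and set $M' = M^* \cup \{uv\}$. Each of the $k$ parallel colored edges between $u$ and $v$ in $B$ pairs with the median edge $uv$ to form an alternating $2$-cycle, contributing $k$. For every alternating cycle in $(B\cdot\{u,v\}, M^*)$, I would show that it lifts to an alternating cycle in $(B, M')$: cycles avoiding the identified edges are unchanged, while a cycle of color $c_i$ using the identified edge $ww'$ (which came from merging a $c_i$-edge $wu$ with a $c_i$-edge $vw'$ in step (2) of the shrinking) becomes the longer cycle obtained by replacing $ww'$ with the path $w - u - v - w'$, with the median edge $uv$ supplying the correct median/color alternation.

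For the direction $\text{cyc}(B) \leq \text{cyc}(B\cdot\{u,v\}) + k$, I would invoke the hypothesis: let $M$ be a median with $uv \in M$, so $\text{cyc}(B) = \text{cyc}(B, M)$. Restricting $M$ to $V(B)\setminus\{u,v\}$ gives a matching $M'$ on $B\cdot\{u,v\}$. The alternating cycles of $(B, M)$ split into three types: (a) cycles disjoint from $\{u,v\}$, which survive unchanged; (b) $2$-cycles formed by one of the $k$ parallel colored $uv$-edges together with the median edge $uv$; and (c) cycles containing both $u$ and $v$ via $uv$, which necessarily have the local shape $w - u - v - w'$ with two $c_i$-edges flanking $uv$, and whose image in the shrunk graph uses the identified edge $ww'$. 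Types (a) and (c) biject with the alternating cycles of $(B\cdot\{u,v\}, M')$, yielding $\text{cyc}(B) = \text{cyc}(B,M) = \text{cyc}(B\cdot\{u,v\}, M') + k \leq \text{cyc}(B\cdot\{u,v\}) + k$.

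The main obstacle is a careful case analysis, by color $c_i$, of the $c_i$-edges incident to $\{u,v\}$: they are either absent at both endpoints, present at exactly one of $u,v$ (in which case they are deleted by shrinking and, because $v$ or $u$ has no matching $c_i$-edge, they cannot lie on any $(M,G_i)$-alternating cycle when $uv\in M$), parallel edges between $u$ and $v$ (which are removed by step (1) of shrinking and contribute a $2$-cycle in type (b)), or have distinct endpoints at $u$ and $v$ (in which case they are merged by step (2) and give the type (c) correspondence). The assumption that each $G_i$ is a matching is what ensures these cases are exhaustive and mutually exclusive and makes the bijection well defined.
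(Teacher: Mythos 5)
Your proof is correct and follows essentially the same two-inequality strategy as the paper's: project a median containing $uv$ down to $B\cdot\{u,v\}$ (losing exactly the $k$ two-cycles formed by the parallel colored edges and the median edge $uv$) and lift a median of the shrunk graph back up by adjoining $uv$. Your closing case analysis of colored edges incident to exactly one of $u,v$ only makes explicit a point the paper's proof leaves implicit.
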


\begin{proof} 
 Consider a median $M$ which contains the edge $uv$ (which implies
 that both $u$ and $v$ are in the same alternating cycle in
 $B_M$). Let $B' = B\cdot\{u, v\}$, $M' = M - \{u, v\}$ (the graph
 obtained from $M$ by removing $u$, $v$ and the edge $uv$).

 Let $C$ be an alternating cycle in $B_M$. If $C$ does not contain $u$
 and $v$, then, obviously, $C$ does not contain any of the $k$ edges
 between $u$ and $v$. Thus, $C$ remains unchanged in $B'_{M'}$. Assume
 now that $C$ contains $uv$. If the length of $C$ is larger than 2,
 shrinking $\{u, v\}$ results in a cycle with smaller length in
 $B'_{M'}$ (the length decreases by $2$). Otherwise, if $C$ has length
 2, it disappears in $B'_{M'}$. Thus the number of alternating cycles
 which disappear in $B' \cup M'$ is $k$, since there are $k$ edges
 between $u$ and $v$. Therefore, $\text{cyc}(B) \leq
 \text{cyc}(B\cdot\{u, v\})+k$.
  
  Now suppose $N'$ is a median of $B'$.  By a similar argument, if $N
  = N' \cup \{u, v\}$, then $B_N$ has $\text{cyc}(B\cdot\{u, v\})+k$
  alternating cycles. So, $\text{cyc}(B) \geq \text{cyc}(B\cdot\{u,
  v\})+k$, and, as $\text{cyc}(B) = \text{cyc}(B\cdot\{u, v\})+k$, we
  have that $M'$ (resp. $N$) is a median of $B'$ (resp. $B$).  %% \qed
\end{proof}

%%%%%%%%%%%%%%%%%%%%%%%%%%%%%%%%%%%%%%%%%%%%%%%%%%%%%%%%%%
\section{A class of tractable instances}\label{sec:results}

Our main theoretical result is the definition of a large class of
tractable instances for the median problem, namely the ones whose
breakpoint graph contains few vertices of degree $3$. Obviously, the
median problem for three genomes involves a breakpoint graph with
maximum degree $3$. We show here that the hardness of the problem is due
to these vertices of degree $3$.

\begin{theorem}\label{thm:main}
  Let $G_1,G_2$, and $G_3$ be three genomes. If there exists a median
  of $B=B(G_1,G_2,G_3)$ with at most $\ell$ edges whose both
  end-vertices are of degree $3$ in $B$, then computing such a median
  can be done in time $O(n^{3} \cdot (\ell+1) \cdot(3^{m}\cdot
  m^{2\ell}+1))$, where $m$ is the number of vertices of degree $3$, and
  $n$ is the number of genes in $G_1,\ G_2,\ G_3$.
\end{theorem}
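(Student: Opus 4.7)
The plan is to combine two layers of enumeration with a polynomial-time subroutine for a residual instance of bounded degree. The first layer pins down the ``hard'' median edges (those whose both endpoints have degree $3$ in $B$), the second layer pins down, at every remaining degree-$3$ vertex, which of its three colored edges lies on the alternating cycle through it, and the subroutine then solves the leftover problem on a graph of maximum degree~$2$.

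First I would enumerate every candidate set $S$ of median edges whose endpoints are both degree-$3$ in $B$. Since $|S|\leq \ell$ and there are at most $\binom{m}{2}$ such pairs, the number of candidate sets is bounded by $\sum_{i=0}^{\ell}\binom{\binom{m}{2}}{i}=O((\ell+1)\cdot m^{2\ell})$. For each fixed $S$, I would apply Proposition~\ref{Shrink} iteratively to shrink each edge of $S$ into the reduced graph $B'=B\cdot S$. By the proposition, $\text{cyc}(B)$ equals $\text{cyc}(B')$ plus a constant depending only on $S$, so the task reduces to computing a median of $B'$ containing no edge joining two originally degree-$3$ vertices of $B$.

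Next, on $B'$, for each remaining degree-$3$ vertex $v$, I would enumerate which of the three colored edges at $v$ is the one used by the alternating cycle through $v$ in the optimal median (at most $3^{m}$ global choices). Any alternating cycle uses exactly one colored edge at each of its vertices, so once the median edge at $v$ is fixed, the other two colored edges at $v$ cannot lie on any alternating cycle of $B'_{M}$ and may be deleted without changing $\text{cyc}(B'_{M})$. Each pruned graph $B''$ then has maximum degree~$2$, hence is a disjoint union of paths and cycles with colored edges.

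The final step is to compute an optimal median of such a max-degree-$2$ colored graph in $O(n^{3})$ time, by a component-wise procedure that decides how consecutive colored edges along each path/cycle are paired by median edges to form $2$-cycles and longer alternating cycles. Returning the best median over all guesses then yields the claimed running time $O(n^{3}\cdot(\ell+1)\cdot(3^{m}\cdot m^{2\ell}+1))$, where the $+1$ absorbs the degenerate case $m=0$ and the minor asymptotic slack in the enumeration bound. The most delicate step will be the last one: one must argue that the maximum alternating-cycle matching on the max-degree-$2$ instance really does split (up to negligible loss) into independent subproblems on its path and cycle components, despite median edges being a~priori allowed to connect vertices across components; the cyclic components in particular require care. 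Once this local subroutine is established, correctness of the two enumeration layers plus Proposition~\ref{Shrink} completes the proof.
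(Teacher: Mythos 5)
Your overall architecture (enumerate the median edges joining two degree-$3$ vertices, then resolve the remaining degree-$3$ vertices by a $3^m$ enumeration, then solve a maximum-degree-$2$ residual instance in $O(n^3)$) is the same as the paper's, and the first enumeration layer together with the use of Proposition~\ref{Shrink} is fine. However, your second layer contains a genuine error. You guess, at each remaining degree-$3$ vertex $v$, \emph{the} colored edge used by \emph{the} alternating cycle through $v$, and delete the other two. But a single median edge can lie on several alternating cycles of different colors simultaneously (this is exactly why Proposition~\ref{Shrink} adds $k$, the number of parallel colored edges, rather than $1$). If $v$ is matched to a vertex $w$ of degree $2$, there may be an $(M,G_1)$-cycle and an $(M,G_2)$-cycle both passing through the median edge $vw$ and using two distinct colored edges at $v$; deleting two of the three colored edges at $v$ then destroys one of these cycles in \emph{every} branch of your enumeration, and the algorithm returns a suboptimal answer. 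Concretely, take colored edges of colors $1$ and $2$ between $v$ and $w$ plus a color-$3$ edge from $v$ to some other vertex: the optimum realizes two $2$-cycles at $vw$, but each of your three branches yields at most one. The correct move, which is what the paper does, is to delete exactly \emph{one} colored edge at $v$: since $w$ has degree at most $2$, at most two colors occur at $w$, hence at least one colored edge at $v$ has a color absent at $w$ and cannot lie on any alternating cycle through the median edge $vw$; guessing \emph{which} edge to remove still costs only $3$ choices per vertex and leaves $v$ with degree $2$ rather than degree $1$.

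The other weak point is the $O(n^3)$ subroutine for the maximum-degree-$2$ residual graph, which you correctly flag as delicate but do not establish. A purely component-wise procedure is not enough: odd components necessarily send crossing median edges to other odd components, and \emph{which} odd components are paired with which affects the total cycle count. The paper resolves this by proving that even paths and even cycles are $0$-independent and odd paths are $1$-independent (Propositions~\ref{indep} and~\ref{1dep}), computing $\text{cyc}$ for each path in closed form and for each even cycle via the cross-free diagonal matching test, and then pairing the odd components by a maximum-weight matching on a complete graph whose edge $\{v_C,v_D\}$ carries weight $\text{cyc}(C\cup D)$, solved with Edmonds's algorithm in $O(n^3)$. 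Without that matching step (or an equivalent argument), your subroutine is not justified.
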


\begin{remark}\label{rem:corollaries}
  Note that, as corollaries of this theorem, we have in particular that, 
  \begin{enumerate}
  \item if $m$ is bounded, then computing a median is tractable,
  \item if $\ell$ is bounded, then computing a median is Fixed-Parameter
    Tractable (FPT) (see~\cite{niedermeier-2008} for a reference on FPT
    algorithms) with parameter $m$.  
  \end{enumerate}
  Moreover, if $m$ is not bounded, we can remove some edges incident
  to vertices of degree $3$, so that in the new instance the number of
  vertices of degree $3$ is bounded. Now, by point 1 above, there is a
  polynomial time algorithm which computes the median of the new
  instance. %% As we show in Section~\ref{sec:algorithm} this gives a
  %% useful way to compute a lower bound on the cost of a median.
\end{remark} 

Informally, to prove Theorem~\ref{thm:main}, we first consider the
case where $B$ is a collection of cycles and paths (i.e. has maximum
degree $2$) and show that a median can be computed in polynomial
time. Next, we consider all possibilities (configurations) for
matching vertices of degree $3$ as median edges. For each configuration,
we reduce the breakpoint graph by shrinking and removing some edges to
obtain a graph whose connected components are paths or cycles. Having
computed all possible configurations for vertices of degree $3$ and
being able to compute a median for all resulting graphs lead to
Theorem~\ref{thm:main}.

%% CEDRIC: this figures arrives too early to really help the reader. I moved it later
%% \begin{figure}[H]{
%% \begin{center}
%% \includegraphics[scale=0.25]{comps.pdf}
%% \caption{{\bf (e)} Each even component has no crossing edge, {\bf (o)} For each odd component $H_1$, there is exactly another odd component $H_2$, such that $(H_1 \cup H_2)$ has no crossing edge.}
%% \label{comps}
%% \end{center}
%% }\end{figure}

From now, $G_1$, $G_2$, and $G_3$ are mixed genomes on $n$ genes, and
$M$ is a median of these genomes, unless otherwise specified. We
denote their breakpoint graph by $B$, and the median graph by $B_M$.

%%%%%%%%%%%%%%%%%%%%%%%%%%%%%%%%%%%%%%%%%%%%%%%%%%%%%%%%%%%%%%%%%%%%%
\subsection{Preliminary results}

We first introduce two useful lemmas that give lower bounds on the
function $\text{cyc}$ in various cases.

\begin{lemma}\label{subpc}
  If $B$ is isomorphic to $P_k$ or $C_{2k}$, for $k\geq 1$, then for
  every subgraph $H \subseteq B$, $\text{cyc}(H) \geq
  \frac{|E(H)|}{2}$.
\end{lemma}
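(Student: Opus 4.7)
\emph{Proof plan.} My plan is to lower-bound $\text{cyc}(H)$ by constructing many pairwise vertex-disjoint alternating $2$-cycles, one per edge of a large edge-matching of $H$. The underlying observation is that for any edge $e = uv$ of $H$, placing the median edge $\{u,v\}$ in $M$ yields an alternating $2$-cycle in $B_M$ whose only colored edge is $e$ itself. Two such $2$-cycles coexist in $B_M$ as soon as their underlying edges of $H$ share no vertex, so the problem reduces to producing a set $F$ of pairwise vertex-disjoint edges of $H$ with $|F| \geq |E(H)|/2$.

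Since $B$ is isomorphic to $P_k$ or $C_{2k}$, every subgraph $H \subseteq B$ has maximum degree $2$; hence the connected components of $H$ are simple paths, with the single exception that $H$ itself may equal $C_{2k}$. First I would verify the matching bound by a standard greedy argument: on a path of $p$ edges, taking every other edge picks $\lceil p/2 \rceil \geq p/2$ independent edges, and on the cycle $C_{2k}$ the same rule picks exactly $k = |E(H)|/2$ edges. Summing over components of $H$ then gives $|F| \geq |E(H)|/2$.

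Next I would define $M$ by taking the edges of $F$ as median edges and extending arbitrarily to a matching on the remaining vertices of $V(B)$ (leaving one vertex unmatched if $|V(B)|$ is odd, as allowed by the remark above). For each $e = uv \in F$, the colored edge $e$ together with the median edge $\{u,v\}$ forms an alternating $2$-cycle composed of edges in $H$; these $|F|$ cycles are pairwise vertex-disjoint because $F$ is, and therefore all coexist in $B_M$. This immediately gives $\text{cyc}(H) \geq |F| \geq |E(H)|/2$. The only step that I expect to require any care at all is the case $H = B = C_{2k}$: it is precisely the evenness hypothesis that allows the greedy matching to hit exactly $|E(H)|/2$, whereas on an odd cycle the same argument would fall short by one half.
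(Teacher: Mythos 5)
Your construction is correct and is essentially the paper's own argument: the paper likewise decomposes any subgraph of $P_k$ or $C_{2k}$ into disjoint paths (plus the full even cycle case), takes every other edge, and places median edges parallel to the chosen colored edges to produce $\lceil |E(H)|/2 \rceil$ disjoint alternating $2$-cycles. No meaningful difference in approach or rigor.
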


\begin{proof}
  Consider the path $P_k = u_1u_2\ldots u_k$. Let $M$ be the matching
  consisting of the edges $u_1u_2, u_3u_4,\ldots$, and $u_{t-1}u_t$,
  where $t = 2\lfloor \frac{k}{2} \rfloor$. Obviously, the number of
  alternating cycles in $P_k \cup M$ is $\lfloor k/2 \rfloor$, so
  $\text{cyc}(P_k) \geq \frac{k}{2} \geq \frac{|E(P_k)|}{2}$.
  Similarly $\text{cyc}(C_{2k}) \geq k = \frac{|E(C_{2k})|}{2}$.  See
  Fig.~\ref{pc}.

  Any proper subgraph $H \subset P_k$ or $C_{2k}$ is a union of
  disjoint paths. If we take the union of matchings described above
  for each of these paths and call it $M$, there are at least
  $|E(H)|/2$ alternating cycles in $H \cup M$. Therefore for any
  subgraph $H \subseteq H$, $\text{cyc}(H) \geq
  \frac{|E(H)|}{2}$. %% \qed
\end{proof}

\begin{figure}[h]{
    \begin{center}
      \includegraphics[scale=0.75]{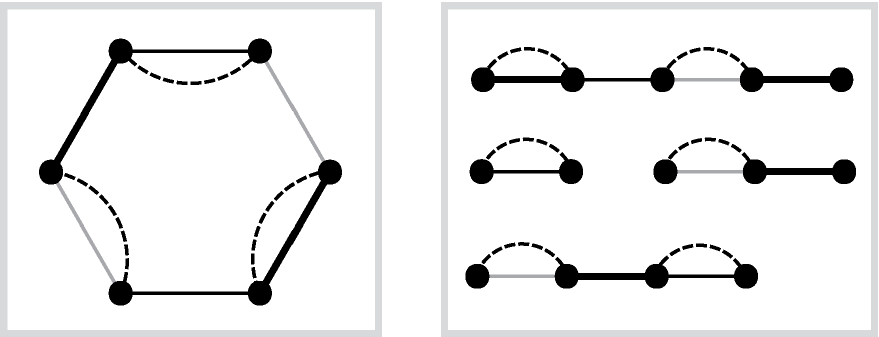}
      \caption{Median edges (dashed) for cycles and union of disjoint paths.}
      \label{pc}
    \end{center}
}\end{figure}

\begin{definition}
  Let $S$ and $T$ be two subgraphs of $B$. $T$ is an {\em
    alternating-subdivision} of $S$ if we can obtain an isomorphic
  copy of $T$ from $S$ as follows: subdivide each edge $e = \{a, b\}$
  by an even (possibly zero) number of vertices resulting in a path
  $av_1v_2\ldots v_{2k}b$, then remove every second edge, i.e.,
  $v_1v_2, v_3v_4, \ldots, v_{2k-1}v_{2k}$. We call the removed edges
  a {\em completing matching} for $T$ respective to $S$.
\end{definition}

In the previous definition, note that there might be more than one way
to obtain an isomorphic copy of $T$ from $S$, and consequently,
completing matching is not necessarily unique.
%So if $T$ is an alternating-subdivision of $S$, there exists a,
%possibly non-perfect, matching of $T$ such that (1) adding the edges
%of this matching to $T$ and (2) replacing by a single edge any maximal
%alternating path starting and ending by edges of $T$, results in
%$S$. Note that such a matching might not be unique. We call such a
%matching a {\em completing matching for $T$ respective to $S$}.

%% CEDRIC
%% {\noindent\bf There was a slight problem here: given $S$ and $T$,
%%   $\text{Rem}(S,T)$ might not be unique. Moreover, the removed edges
%%   do not exist in $S$ and $T$, so it is weird to give them a name.}

\begin{lemma}\label{alsub}
  If $T$ is an alternating-subdivision of $S$, then $\text{cyc}(T)
  \geq \text{cyc}(S)$.
\end{lemma}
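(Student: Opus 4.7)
The plan is to take an optimal matching $M_S$ on $V(S)$ realizing $\text{cyc}(S)$ alternating cycles and lift it to a matching $M_T$ on $V(T)$ realizing at least as many. Concretely, fix a completing matching $N$ witnessing $T$ as an alternating-subdivision of $S$. Since the edges of $N$ are supported on the subdivision vertices, which lie in $V(T) \setminus V(S)$, the set $M_T := M_S \cup N$ is a well-defined matching on $V(T)$, and I claim it is good enough.

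The core step is to show that every alternating cycle of $S \cup M_S$ lifts to a vertex-disjoint alternating cycle of $T \cup M_T$ of the same colour class. Let $C = v_0 e_1 v_1 f_1 v_2 \ldots e_r v_{2r-1} f_r v_0$ be such a cycle, with $e_1,\ldots,e_r$ all of some common colour $c$ and $f_1,\ldots,f_r \in M_S$. Each $e_i = v_{2i-2} v_{2i-1}$ corresponds, via the alternating-subdivision, to a path $v_{2i-2}\, w^i_1\, w^i_2 \cdots w^i_{2k_i}\, v_{2i-1}$ whose odd-position edges $v_{2i-2} w^i_1,\, w^i_2 w^i_3,\, \ldots,\, w^i_{2k_i} v_{2i-1}$ belong to $T$, and whose even-position edges $w^i_1 w^i_2,\, \ldots,\, w^i_{2k_i-1} w^i_{2k_i}$ belong to $N$. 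Replacing each direct step $v_{2i-2} \to v_{2i-1}$ in $C$ by this subdivided walk produces a closed walk in $T \cup M_T$ whose edges alternate between colour-$c$ edges of $T$ and matching edges (from $N$ at the internal subdivision vertices and from $M_S$ at the $v$-vertices), so it is an $(M_T, G_c)$-alternating cycle of even length $2r + 2\sum_i k_i$.

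Distinct alternating cycles of $S \cup M_S$ are vertex-disjoint, because at each $v$ covered by $M_S$ the unique incident matching edge pins down at most one alternating cycle through $v$; and distinct lifts furthermore use disjoint sets of subdivision vertices. Hence the lifted cycles are pairwise vertex-disjoint in $T \cup M_T$, so $T \cup M_T$ contains at least $\text{cyc}(S)$ alternating cycles, which gives $\text{cyc}(T) \geq \text{cyc}(S)$.

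The one point that merits care is that each kept $T$-edge arising from subdividing $e_i$ must carry the colour $c$ of $e_i$, so that the lifted cycle is monochromatically coloured in its non-matching edges; this is the content I would read into the phrase ``isomorphic copy'' in the definition of alternating-subdivision. Once this colour-preservation is granted, the only remaining work is the bookkeeping above, which is routine. An equivalent, slightly less index-heavy presentation is a short induction on the total number of subdivision vertices, where each inductive step subdivides a single edge by two vertices and appends one edge to the matching; this reduces the verification to the single alternating-triple replacement $a{-}b \mapsto a{-}w_1{-}w_2{-}b$.
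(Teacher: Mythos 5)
Your proof is correct and follows essentially the same route as the paper: take an optimal matching of $S$, adjoin a completing matching on the subdivision vertices, and observe that each alternating cycle of $S$ lifts to a distinct alternating cycle of $T$. You simply spell out the bookkeeping (and the colour-preservation point) that the paper leaves implicit.
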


\begin{proof}
  Let $M$ be a median of $S$, $M'$ an arbitrary completing matching
  for $T$ respective to $S$ and $M'' = M \cup M'$. $M''$ is a perfect
  matching of $T$, and each alternating cycle in $S \cup M$ defines a
  unique alternating cycle in $T \cup M''$ which implies that
  $\text{cyc}(T) \geq \text{cyc}(S)$ (see Fig.~\ref{proofalsub}).
  %% \qed
\end{proof} 

\begin{figure}[h]
  \begin{center}(a)
    \includegraphics[scale=0.55]{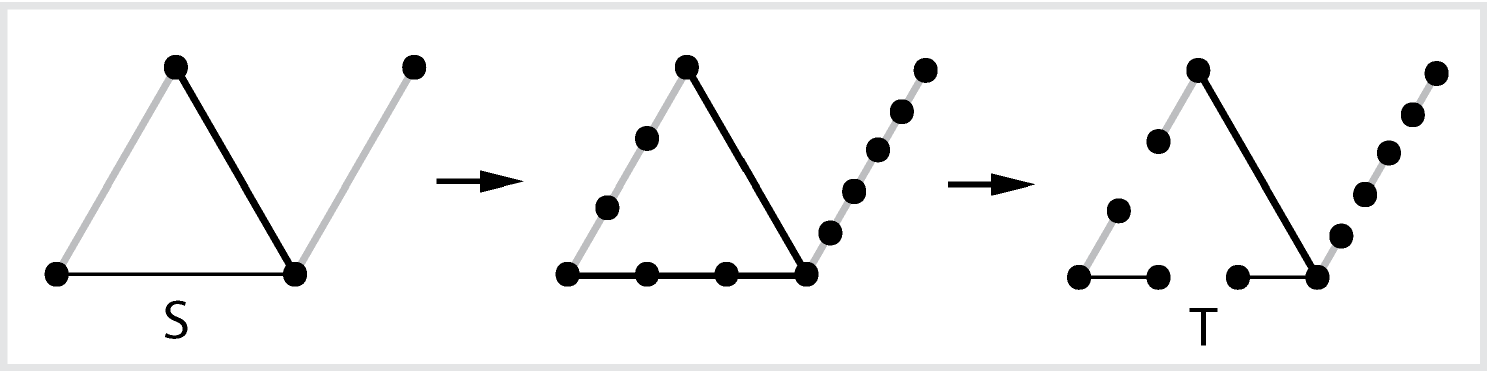} \\ (b)
    \includegraphics[scale=0.55]{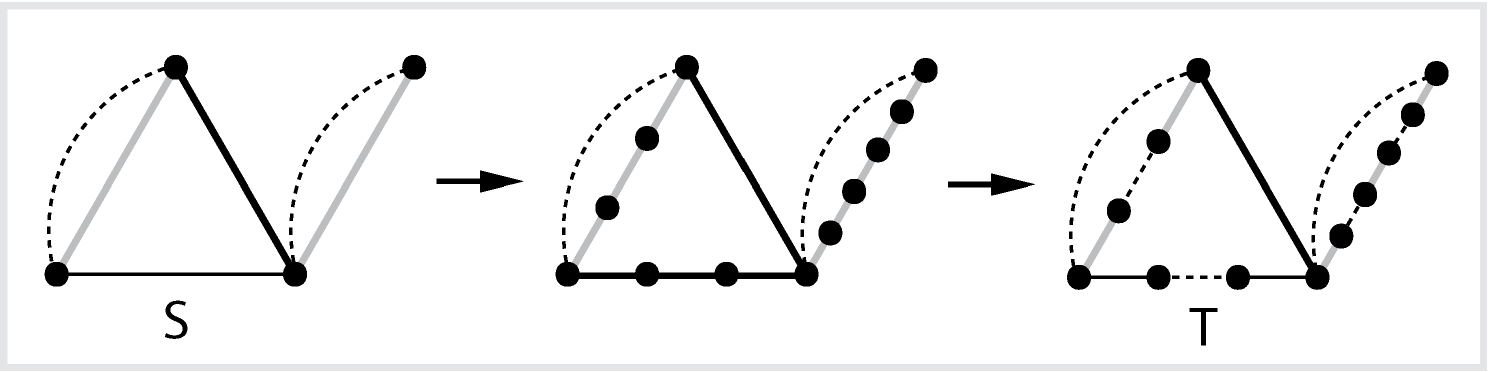}
    \caption{(a) Obtaining $T$ as an alternating-subdivision of
      $S$. (b) Obtaining a matching of $T$ from a median of $S$ (the
      dashed edges are the median edges and the edges of a completing
      matching for $T$ respective to $S$).}
     \label{sub}
     \label{proofalsub}
  \end{center}
 \end{figure}

%% \begin{figure}[h]
%%  \begin{center}
%%    \includegraphics[scale=0.75]{asm.pdf}
%%    \caption{Obtaining a matching of $T$ from a median of $S$ (dashed
%%      edges are the median edges)}

%%  \end{center}
%% \end{figure}

%%%%%%%%%%%%%%%%%%%%%%%%%%%%%%%%%%%%%%%%%%%%%%%%%%%%%%%%%%%%%%%%%%%%%
\subsection{Independence of arbitrary paths and even cycles}

In this section we introduce the fundamental notion of {\em
  independence} of connected components of cycles and paths in a
breakpoint graph. 

\begin{definition}\label{def:crossing-indep}
  Let $H$ be a subgraph of $B$. An {\em $H$-crossing
    edge}\index{crossing edge} in a median graph $B_M$ is a median
  edge which connects a vertex in $V(H)$ to a vertex in $V(B) -
  V(H)$. An {\em $H$-crossing cycle}\index{crossing cycle} is an
  alternating cycle which contains at least one $H$-crossing edge.
  The subgraph $H$ is {\em $k$-independent} if there is a median $M$
  for $B$ such that the number of $H$-crossing edges in $B_M$ is at
  most $k$.
\end{definition}

%%%%%%%%
\begin{proposition}\label{indep}
  Let $H$ be a connected component of $B$. If $H$ is isomorphic to
  $P_{2k}$ or $C_{2k}$, for $k\geq 1$, then $H$ is 0-independent.
\end{proposition}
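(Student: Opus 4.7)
Starting from an arbitrary median $M$ of $B$, the plan is to modify $M$, in two stages, into another median $M^*$ that has no median edge between $V(H)$ and $T:=V(B)\setminus V(H)$; this is exactly the $0$-independence of $H$.

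In the first stage I will show that any median admits no $H$-crossing alternating cycle. Suppose for contradiction that $C\subseteq B_M$ is such a cycle, of color $c$. Since $H$ is a connected component of $B$, every $c$-colored edge incident to $V(H)$ stays inside $V(H)$, so $C$ decomposes into $q\geq 1$ maximal segments $S_1,\dots,S_q$ in $V(H)$ and $q$ segments $T_1,\dots,T_q$ in $T$, glued by $2q$ crossing median edges. Writing the $j$-th inside segment as running from $a_j$ (entered via the crossing $\alpha_j a_j$) to $b_j$ (exited via $b_j\beta_j$), the alternation of $C$ forces each $S_j$ to begin and end with a $c$-colored edge and hence to have odd length. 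I will then define $M'$ by replacing the $2q$ crossings with the $2q$ internal edges $\{a_jb_j\}_j\cup\{\alpha_j\beta_j\}_j$: each $S_j\cup\{a_jb_j\}$ becomes an alternating cycle inside $V(H)$, and the $T_j$'s spliced together by the new edges $\alpha_j\beta_j$ combine into a single alternating cycle in $T$. This trades $1$ cycle for $q+1$, strictly increasing $\text{cyc}(B)$ and contradicting the maximality of $M$.

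In the second stage I will deal with the $H$-crossing median edges that may still remain in $M$; by the first stage they are all \emph{idle}, i.e.\ they lie in no alternating cycle of $B_M$. The parity $|V(H)|=2k$ forces their number to be even, so I can pair them up arbitrarily and, for each pair $\{v,u\},\{v',u'\}$ with $v,v'\in V(H)$ and $u,u'\in T$, swap them to the internal edges $\{v,v'\},\{u,u'\}$. Since no alternating cycle used any of the discarded crossings, every alternating cycle of $B_M$ survives the swap, so the cycle count cannot drop; the resulting matching $M^*$ is still a median and is free of $H$-crossing edges.

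The delicate part will be the cycle bookkeeping in the first-stage swap: that the new edges $\alpha_j\beta_j$ genuinely splice the $T_j$'s into a single alternating cycle (rather than several), and that inserting $a_jb_j$ into each odd-length $S_j$ preserves proper alternation. Both facts reduce to the observation that the edges flanking each crossing of $C$ are $c$-colored, forced by the median/colored alternation. A small additional check I will make is that if several $H$-crossing cycles coexist, their crossings are vertex-disjoint (each vertex carries at most one median edge), so the swaps can be carried out simultaneously and combined into a single improvement.
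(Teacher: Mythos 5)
Your Stage~2 and the parity bookkeeping are fine, and your analysis of a single crossing cycle $C$ (odd-length inside segments, the outside segments splicing into one cycle) is correct \emph{as far as the colour of $C$ is concerned}. The gap sits exactly at the ``small additional check'' you defer to the end: you assert that coexisting $H$-crossing cycles have vertex-disjoint crossings because each vertex carries one median edge. But alternating cycles of \emph{different} colours can share median edges --- a median edge $uv$ lies in one component of $M\cup G_i$ for each colour $i$, hence in up to three alternating cycles (two parallel coloured edges $uv$ plus the median edge $uv$ already give two $2$-cycles through the same median edge). Consequently, deleting the $2q$ crossing median edges of $C$ may simultaneously destroy $H$-crossing cycles of the other two colours passing through those same edges, and the new edges $a_jb_j$, $\alpha_j\beta_j$ need not recreate as many: typically two colour-$c'$ cycles, one through $\alpha_1a_1$ and one through $b_1\beta_1$, get merged by the rerouting into a single colour-$c'$ cycle, a net loss of one in that colour. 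So ``trades $1$ cycle for $q+1$'' is not the full balance sheet, the strict increase does not follow, and the contradiction --- hence your conclusion that \emph{every} median is free of $H$-crossing cycles --- is not established. (That conclusion is also stronger than needed: $0$-independence only asks for the existence of one good median.)

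The paper avoids this trap by treating all $H$-crossing cycles of all colours at once: it removes every crossing median edge of every crossing cycle, so exactly $|C_{M,H}|$ cycles are destroyed, and it then rematches the affected vertices inside $H$ and outside $H$ so as to recover at least $|C_{M,H}|/2$ alternating cycles on each side. That recovery step is where the hypothesis that $H$ is a path or an even cycle enters, through Lemma~\ref{subpc} and the alternating-subdivision Lemma~\ref{alsub}; your Stage~1 never uses this hypothesis, which is itself a warning sign. The price the paper pays is that its transformation is only non-decreasing, not strictly improving, so it needs an extra termination argument (its Claim~4, showing the set of colored edges in crossing cycles strictly shrinks) to iterate down to a median with no crossing cycle, after which your Stage~2 applies verbatim. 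To rescue your local swap you would have to balance all three colours simultaneously along the deleted crossing edges, at which point you essentially reconstruct the paper's global counting.
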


\begin{proof}
  Let $M$ be a median of $B$. Suppose $M$ has $\ell$ $H$-crossing
  edges in $B_M$. If $\ell = 0$, then we are done, so assume that
  $\ell > 0$. Since $H$ has an even number of vertices, $\ell$ is even
  and $\ell \geq 2$. Because $H$ is a connected component in $B$, each
  $H$-crossing cycle contains an even number of $H$-crossing edges.
  
  Let $C_{M,H}$ be the set of all $H$-crossing cycles in $B_M$, and
  $E^{cr}_{M,H}$ be the set of all $H$-crossing edges in $B_M$.  Let
  $\text{X}(M)$ be the set of colored edges in all cycles of
  $C_{M,H}$, and $\text{Y}(M)$ be the set of all $H$-crossing edges in
  all cycles of $C_{M,H}$.

  {\smallskip\noindent\em Case 1.} If there is no $H$-crossing cycle,
  i.e., $C_{M,H} = X(M)=Y(M) = \emptyset$, we modify $M$ by
  removing all $H$-crossing edges, and re-matching the vertices inside
  of $H$ together and outside of $H$ together. Since $\ell$ is even,
  this is always possible and we get a median with no $H$-crossing
  edge.  

  {\smallskip\noindent\em Case 2.} From now, we assume that there
  exists at least one $H$-crossing cycle. The remainder of the proof
  relies on a transformation on $M$ that reduces the number of edges
  in $H$-crossing cycles, leading to a median with no $H$-crossing
  edge.

  {\smallskip\noindent\em Step 1.} The first step consists of
  choosing, for each $H$-crossing cycle, an arbitrary colored edge in
  $H$ incident to an $H$-crossing edge from this cycle. Let $S$ be the
  subgraph of $B$ induced by these chosen colored edges and
  $T=\text{X}(M) - S$.

  %% See Fig.~\ref{result}.
  %% %% CEDRIC. I do not think this figure ads a lot of clarity.
  %% \begin{figure}
  %%   \begin{center}
  %%     \includegraphics[scale=0.75]{result.pdf}
  %%     \caption{Dashed lines are the median edges (the bold dashed edge
  %%       is in $M'$ as in the proof of Proposition~\ref{indep}). The
  %%       edges of $S$ and $T$ are shown by solid and bold solid edges,
  %%       respectively (in dark gray areas). Note that $T$ can have
  %%       edges in $H$ and $B-H$.}
  %%       \label{result}
  %%   \end{center}
  %% \end{figure}

  {\smallskip\noindent\em Claim 1. $T$ is an alternating-subdivision
    of $S$.} For a vertex $x \in V(S)$ let $x_M$ be the neighbor of
  $x$ in $M$. If $u, v \in V(S)$ and $uv \in E(S)$ then, by
  definition, $uv$ is a colored edge of an $H$-crossing cycle which is
  incident to an $H$-crossing edge. Therefore, there is an alternating
  path from $u_M$ to $v_M$, with alternating colored and median edges
  from that cycle. If this path has $t$ colored edges, we subdivide
  the edge $uv$ using $2t-2$ vertices and remove every second
  edge. Proceeding in this way for every edge $uv \in E(S)$ we obtain
  an alternating-subdivision $T$ of $S$.

  {\smallskip\noindent\em Claim 2. $\text{cyc}(T) \geq
    |C_{M,H}|/2$.} First, as every colored edge is in at most one
  alternating cycle and two edges of the same color are not incident
  to each other, $|E(S)| = |C_{M,H}|$. Also $S \subseteq H$, and
  by Lemma~\ref{subpc}, $\text{cyc}(S) \geq |E(S)|/2$. Finally, from
  Lemma~\ref{alsub}, $\text{cyc}(T) \geq \text{cyc}(S) \geq |E(S)|/2 =
  |C_{M,H}|/2$.

  {\smallskip\noindent\em Step 2.} Now we remove all the edges in
  $E^{cr}_{M,H}$. Let $M_S$ be an arbitrary median of $S$, $M_T$ the
  matching for $T$ defined by the union of $M_S$ and an arbitrary
  completing matching for $T$ respective to $S$, and $M' = (M -Y(M))
  \cup M_S \cup M_T$.

  {\smallskip\noindent\em Claim 3. $M'$ is a median of $B$.}  First,
  by removing the edges in $Y(M)$, the total number of
  alternating cycles decreases by $|C_{M,H}|$. Next, $M_S$ and $M_T$
  contain at least $|C_{M,H}|/2$ alternating cycles each (Claim 2
  above). Hence, the new matching $M'$ contains at least the same
  number of alternating cycles than $M$. By definition of a median,
  $M'$ can not contain more alternating cycles than $M$, so it
  contains the same number of alternating cycles, and is a median of
  $B$. Note that this also implies that $\text{cyc}(S) = \text{cyc}(T)
  = \frac{|C_{M,H}|}{2}$.
  
  %{\smallskip\noindent \bf Ahmad. I am not sure of what you try to
   % prove here. Could you make it a claim as above.}
  {\smallskip\noindent\em Claim 4. $X(M') \subset X(M)$ and $X(M')
    \neq X(M)$.}  If there exists $e \in \text{X}(M') - \text{X}(M)$
  then there would be at least one $H$-crossing cycle induced by $M'$
  which is not induced by $M_S$ or $M_T$, this implies $B_{M'}$ would contain
  more alternating cycles than $B_{M}$, which contradicts the fact
  that $B_M$ and $B_{M'}$ have the same number of alternating cycles.
  Next, $\text{X}(M') \subset \text{X}(M)$, as $E(S) \subset
  \text{X}(M)$ and $E(S) \cap \text{X}(M') = \emptyset$ (the vertices
  in $S$ are matched to themselves). Therefore, $|X(M')| < |X(M)|$.

  By iterating the above steps we obtain a median with no crossing
  cycle. Then, by case 1, we can modify this median to a median
  without $H$-crossing edge. %% \qed
\end{proof}

%% \begin{remark}\label{proc}
%%   The transformation introduced in the proof of
%%   Proposition~\ref{indep} can be applied as long as there are at least
%%   two $H$-crossing edges. Indeed, if there is no $H$-crossing cycle
%%   and there are at least two $H$-crossing edges, we can remove two of
%%   them and match respectively their end-vertices in $H$ together and
%%   their remaining end-vertices together. Otherwise, if there is at
%%   least one $H$-crossing cycle we can define $S$ and $T$ as before and
%%   proceed with the transformation described in the proof.
%%   \end{remark}

\begin{proposition}\label{1dep}
  Let $H$ be a connected component of $B$. If $H$ is isomorphic to
  $P_{2k-1}$, for $k \geq 1$, then $H$ is 1-independent. 
\end{proposition}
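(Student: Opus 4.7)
The plan is to follow the two-step structure of Proposition~\ref{indep}, adapting the parity reasoning to the odd case $|V(H)|=2k-1$. Starting from any median $M$ with $\ell$ $H$-crossing edges in $B_M$, the goal is to produce another median $M'$ with at most one $H$-crossing edge.

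I would first eliminate all $H$-crossing cycles exactly as in Case~2 of Proposition~\ref{indep}. Lemma~\ref{subpc} is stated for $P_k$ with $k$ arbitrary, so in particular it applies to $P_{2k-1}$, and Lemma~\ref{alsub} is parity-insensitive; hence the construction that chooses one colored edge in $H$ per $H$-crossing cycle to form $S\subseteq H$, sets $T=\text{X}(M)-S$, uses $\text{cyc}(S),\text{cyc}(T)\ge |C_{M,H}|/2$, and replaces the crossing cycles by internal medians of $S$ and $T$ together with a completing matching, transfers verbatim. Iterating yields a median, which I continue to call $M$, with no $H$-crossing cycle; in particular every surviving crossing edge lies outside every alternating cycle of $B_M$.

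Next I would carry out the parity-analog of Case~1. Since $M$ is a near-perfect matching and $|V(H)|=2k-1$ is odd, the parity of $\ell$ is determined by where the (at most one) unmatched vertex of $M$ lies: if it lies in $V(H)$ then $\ell$ is even, otherwise $\ell$ is odd. In the even subcase I remove all $\ell$ crossing edges, pair the $\ell$ freed vertices of $V(H)$ among themselves by median edges, and do likewise for the $\ell$ freed vertices of $V(B)\setminus V(H)$, obtaining $\ell'=0$. In the odd subcase I retain one arbitrary crossing edge and discard the remaining $\ell-1$ (an even number); both sides then have an even number of freed vertices, so the analogous re-matching yields $\ell'=1$. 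Because the discarded crossing edges were not in any alternating cycle, no cycles are destroyed; and because $M$ is a median, the re-matching cannot create new alternating cycles either, so $M'$ has the same cycle count as $M$ and is itself a median. The main subtlety—and the only genuine departure from Proposition~\ref{indep}—is the parity bookkeeping that forces exactly one crossing edge to be retained in the odd subcase, which falls out of the counting above once the location of the unmatched vertex is tracked carefully.
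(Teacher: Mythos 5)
Your proposal is correct and follows essentially the same route as the paper: reuse the Case~2 transformation of Proposition~\ref{indep} (valid because Lemma~\ref{subpc} covers subgraphs of $P_{2k-1}$) to eliminate all $H$-crossing cycles, then re-match as in Case~1 while the odd size of $H$ forces exactly one crossing edge to remain. The paper's version is terser (it simply asserts that the number of $H$-crossing edges is odd, since the median is a perfect matching on the $2n$ gene extremities), whereas you additionally track a possible unmatched vertex; this is harmless extra generality, not a discrepancy.
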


\begin{proof}
  We follow the same proof strategy than for
  Proposition~\ref{indep}. The number of $H$-crossing edges is odd. If
  there is no $H$-crossing cycle, we can remove an even number of them
  as in case 1 of the proof of Proposition~\ref{indep}, leaving only
  one $H$-crossing edge. Otherwise, if we assume that there are
  $H$-crossing cycles, we can apply the transformation defined in case
  2 of the proof of Proposition~\ref{indep}. It has similar
  properties, as, from Lemma~\ref{subpc}, for every subgraph $H'
  \subseteq P_{2k-1}$, $\text{cyc}(H') \geq |E(H')|/2$, which implies
  again that $\text{cyc}(S) = \text{cyc}(T) = \frac{|C_{M,H}|}{2}$.
  %% \qed
\end{proof}

%The results above lead to the following proposition on the structure
%of a median graph.

\begin{proposition}\label{themedian}
  If $B$ contains only cycles and paths, there exists a median of $B$
  in which even components have no crossing edge, and each odd path
  has exactly one crossing edge.
\end{proposition}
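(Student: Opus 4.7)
The plan is to combine Propositions~\ref{indep} and~\ref{1dep} by processing the components of $B$ in sequence, producing a single median that simultaneously satisfies the required independence on every component. First I would note a parity constraint: $|V(B)|=2n$ is even, and each odd path contributes an odd number of vertices, so the number of odd paths is even. Thus assigning exactly one crossing edge to each odd path is combinatorially consistent with the existence of a perfect matching on $V(B)$.

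Starting from an arbitrary median $M_0$, I would fix an ordering $H_1,\dots,H_t$ of the components and construct $M_i$ from $M_{i-1}$ by applying Proposition~\ref{indep} to $H_i$ if $H_i$ is an even component or Proposition~\ref{1dep} if $H_i$ is an odd path, so that in $M_i$ the component $H_i$ has the prescribed crossing-edge count ($0$ or $1$, respectively). The invariant to maintain is that every $H_j$ with $j<i$ still has its prescribed count in $M_i$.

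The key observation enabling this invariant is that an alternating cycle can transit between distinct components of $B$ only via median edges, so it cannot pass through any component whose number of crossing edges is $0$ (no entry possible) or $1$ (entry but no exit). Hence, for every $j<i$, no colored edge of $H_j$ lies on an $H_i$-crossing cycle, and the Case~2 rewriting in the proofs of Propositions~\ref{indep} and~\ref{1dep} touches no vertex of $H_j$.

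The delicate step, and the main obstacle I expect, is the Case~1 re-matching, in which all $H_i$-crossing edges are removed and their outside endpoints are re-paired. A freed outside endpoint cannot lie in a previously fixed even component (which would contradict its $0$ crossing edges), and it lies in a previously fixed odd path only if that path's unique crossing edge ran to $H_i$; hence at most one freed endpoint per previously fixed odd path. A consistent re-pairing can then be chosen so that each previously fixed odd path still has exactly one crossing edge, by pairing its freed endpoint with an outside vertex in some other component; the feasibility of this choice reduces to a short parity check on the number of freed endpoints contributed by each category of components (previously fixed odd paths versus yet-unprocessed components), using the fact that the total number of freed endpoints has parity dictated by $|V(H_i)|$. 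After processing all $t$ components in this fashion, $M_t$ is the required median.
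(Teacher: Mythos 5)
Your proof is correct and follows essentially the same route as the paper: the paper's one-sentence proof likewise applies the transformations of Propositions~\ref{indep} and~\ref{1dep} to each component in turn, merely asserting that this does not increase the number of crossing edges in the other components. Your argument actually supplies the justification the paper omits (no alternating cycle can traverse a component with at most one crossing edge, and the Case~1 re-pairing of freed outside endpoints can always be chosen to preserve the counts on previously fixed components), so it is, if anything, more complete.
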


\begin{proof}
  This result follows from applying, on an arbitrary median graph, the
  transformation introduced in the proof of in Proposition~\ref{indep}
  to each even/odd path or even cycle of the breakpoint graph,
  reducing then the number of crossing edges for each of them, without
  increasing the number of crossing edges in other components.  %% \qed
\end{proof}

%%%%%%%%%%%%%%%%%%%%%%%%%%%%%%%%%%%%%%%%%%%%%%%%%%%%%%%%%
%\subsection{Computing $\text{cyc}(C_{2k})$ and $\text{cyc}(P_k)$}
\subsection{Alternating cycles for arbitrary paths and even cycles}

The results of the previous section open the way to computing a median
of a breakpoint graph with maximum degree $2$ by considering each path
or even cycle independently, and matching odd paths into pairs (each
defined by a single crossing edge).  The main point of the current
section is to show that paths and even cycles are easy to consider
when computing a median.

\begin{proposition} \label{cycP}
  If $H\subseteq B$ is isomorphic to $P_k$, for some $k\geq 1$, then
  $\text{cyc}(H) = \lfloor \frac{k}{2}\rfloor$. Moreover, there exists
  a median whose edges in $H$ define $\lfloor \frac{k}{2}\rfloor$
  alternating 2-cycles, and one crossing edge incident to a terminal
  vertex of $H$ if $k$ is odd.
\end{proposition}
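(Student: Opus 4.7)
The plan is to prove the equality $\text{cyc}(H)=\lfloor k/2\rfloor$ by matching lower and upper bounds, and then read off the structural "moreover" statement from the case of equality.

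First, for the lower bound, I would reuse the explicit construction from the proof of Lemma~\ref{subpc}. Writing $H = u_1 u_2 \cdots u_k$, take the matching $\{u_1u_2,\ u_3u_4,\ \dots,\ u_{t-1}u_t\}$ with $t=2\lfloor k/2\rfloor$. Each median edge $u_{2i-1}u_{2i}$ is parallel to the colored path-edge between $u_{2i-1}$ and $u_{2i}$, so together they form an alternating $2$-cycle. This immediately yields $\lfloor k/2\rfloor$ alternating $2$-cycles inside $H$, hence $\text{cyc}(H)\ge\lfloor k/2\rfloor$.

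For the upper bound, I would count. Any family of vertex-disjoint alternating cycles $C_1,\dots,C_t$ of respective lengths $2\ell_1,\dots,2\ell_t$ in $H\cup M$ uses $\sum_i \ell_i$ median edges and $2\sum_i \ell_i$ distinct vertices of $H$. Since $|V(H)|=k$, this forces $\sum_i\ell_i\le k/2$, and since each $\ell_i\ge 1$ we get $t\le\lfloor k/2\rfloor$. Combined with the lower bound this gives $\text{cyc}(H)=\lfloor k/2\rfloor$.

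For the "moreover" statement, notice that the counting argument saturates only when every $\ell_i=1$, i.e.\ every alternating cycle has length exactly $2$. Thus any optimal configuration in $H$ is necessarily a disjoint union of $\lfloor k/2\rfloor$ alternating $2$-cycles, and this configuration is realized by the explicit matching above. When $k$ is odd, exactly one vertex of $H$ remains unmatched by these internal edges, and since the path alternation forces this unmatched vertex to be a terminal one (say $u_k$ in the construction above), a median of $B$ — being a perfect matching on $V(B)$ — must link $u_k$ to some vertex outside $V(H)$, producing exactly one $H$-crossing edge incident to a terminal vertex.

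The main obstacle is ensuring that this locally optimal structure on $H$ can be extended to an actual median of the full graph $B$, rather than merely a matching optimal for $\text{cyc}(H)$. I would handle this exactly as in the proofs of Propositions~\ref{indep} and~\ref{themedian}: starting from any median $M$ of $B$, replace the restriction $M\cap (V(H)\times V(H))$ by the explicit $2$-cycle matching above, and reroute the crossing edges (at most one, when $k$ is odd) using the $0$- and $1$-independence already established. Since the replacement preserves the number of alternating cycles within $H$ at its maximum value $\lfloor k/2\rfloor$ and does not decrease the count elsewhere, the resulting matching is still a median of $B$ and has the claimed structure.
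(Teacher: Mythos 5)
Your lower bound and the final extension step (via Propositions~\ref{indep} and~\ref{themedian}) are fine, but the upper bound has a genuine gap: distinct alternating cycles in $H\cup M$ need \emph{not} be vertex-disjoint, so a bound on the size of a vertex-disjoint family does not bound $\text{cyc}(H)$, which counts \emph{all} alternating cycles. An $(M,G_a)$-cycle and an $(M,G_b)$-cycle with $a\neq b$ can share vertices and even median edges. Concretely, take $P_4=v_1v_2v_3v_4$ with edge colors $1,2,1$ and $M=\{v_2v_3,\,v_1v_4\}$: you get the $2$-cycle on $\{v_2,v_3\}$ (color $2$) \emph{and} the $4$-cycle $v_1v_2v_3v_4$ (color $1$), which overlap in $v_2$, $v_3$ and the median edge $v_2v_3$; your count would charge $2\ell_1+2\ell_2=6$ vertices to a $4$-vertex graph. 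The same flaw is fatal in general: applied verbatim to $C_{2k}$, your argument would yield $\text{cyc}(C_{2k})\le k$, contradicting Lemma~\ref{cycC} and the existence of second-kind cycles achieving $k+1$. The paper's proof avoids this by induction on $k$ together with Proposition~\ref{Shrink}: if $H_M$ has no $2$-cycle, every alternating cycle contains at least two \emph{colored} edges (each colored edge lies in at most one alternating cycle), giving at most $\lfloor(k-1)/2\rfloor$ cycles; otherwise a $2$-cycle is shrunk and the bound follows from the inductive hypothesis on the resulting shorter path(s). Some such mechanism for handling overlapping cycles of different colors is indispensable and is missing from your argument.

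A secondary point: your claim that saturation forces \emph{every} optimal configuration on $H$ to consist solely of $2$-cycles is also false (the $P_4$ example above is optimal yet contains a $4$-cycle). Fortunately the proposition only asserts \emph{existence} of a median with the stated structure, which your explicit construction does provide once the upper bound is repaired.
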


\begin{proof}
  From Lemma~\ref{subpc}, $\text{cyc}(H) \geq \lfloor
  \frac{k}{2}\rfloor$. We use induction on $k$ to show that
  $\text{cyc}(H) \leq \lfloor \frac{k}{2}\rfloor$. This obviously
  holds for $k = 1$. So we assume that $k \geq 2$, and consider a
  median $M$ for $H$. If there is no 2-cycle (an alternating cycle
  consisting of two parallel edges) in $H_M$, each alternating cycle
  has length at least $4$, and hence at least 2 colored edges. So
  $\text{cyc}(H) \leq \lfloor \frac{|E(H)|}{2} \rfloor = \lfloor
  \frac{k-1}{2} \rfloor \leq \lfloor \frac{k}{2}\rfloor$.

  Now assume that the median $M$ contains a 2-cycle, with vertices $u$
  and $v$. Shrinking $\{u, v\}$ results in $H'$ that is either a
  single path with $k-2$ vertices or two paths with $p$ and $q$
  vertices such that $p+q = k-2$.  In both cases, using induction and
  the fact that all paths are 0-independent or 1-independent, we can
  conclude that,
  \begin{itemize}
  \item if $H'$ contains one path, $\text{cyc}(H') \leq \lfloor
    \frac{k-2}{2} \rfloor + 1 = \lfloor \frac{k}{2}\rfloor$,
  \item if $H'$ contains two paths, $\text{cyc}(H') \leq \lfloor
    \frac{p}{2} \rfloor + \lfloor \frac{q}{2} \rfloor + 1 \leq \lfloor
    \frac{k}{2}\rfloor$.
  \end{itemize}
  To obtain a median with exactly $\lfloor \frac{k}{2}\rfloor$
  alternating cycles in $H$, we can simply define median edges by
  linking successive vertices in $H$ (as in the proof of
  Lemma~\ref{subpc}). If $k$ is odd this forces the unique
  $H$-crossing edge (Proposition~\ref{1dep}) to contain the last end
  vertex of $H$ (one of its two end vertices), which has no impact on
  the number of alternating cycles as, by definition, this crossing
  edge will not belong to any alternating cycle.  %% \qed
\end{proof}

%% \begin{remark}
%%   Note that if $B$ is isomorphic to $P_k$, then $\text{cyc}(B)$ is
%%   independent from the edge coloring of $B$, and then computing $m(G)$
%%   can be done in time $O(k)$.
%% \end{remark}

\begin{lemma} \label{cycC}
  If $B$ is isomorphic to $C_{2k}$, for some $k\geq 1$, then either
  $\text{cyc}(B) = {k}$ or $\text{cyc}(B) = {k} +1$.
\end{lemma}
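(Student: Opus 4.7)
The plan is to sandwich $\text{cyc}(C_{2k})$ between $k$ and $k+1$, which immediately gives the two cases of the statement. The lower bound $\text{cyc}(C_{2k}) \geq k$ is already handed to us by Lemma~\ref{subpc} applied with $H = B$: the matching that pairs consecutive vertices of the cycle realises $k$ alternating $2$-cycles.

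For the upper bound, I would fix an arbitrary edge $e$ of $B$ and pass to the subgraph $B' = B \setminus e \cong P_{2k}$, which has the same vertex set as $B$. Proposition~\ref{cycP} then tells us that $\text{cyc}(B') = k$. For any perfect matching $M$ of $V(B)$, every alternating cycle of $B \cup M$ either avoids $e$, in which case it is also an alternating cycle of $B' \cup M$, or contains $e$; and in the latter case there can be at most one such cycle. The reason is that $e$ has a unique colour, say $c_i$, so it can only lie in an $(M,G_i)$-alternating cycle, and the graph $M \cup G_i$ has maximum degree $2$, so its cycles are pairwise edge-disjoint. Hence $c(B \cup M) \leq c(B' \cup M) + 1 \leq k+1$, and taking the maximum over $M$ gives $\text{cyc}(B) \leq k+1$.

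The only step that needs care is the observation that a fixed coloured edge belongs to at most one alternating cycle; with that in hand, the two bounds combine to force $\text{cyc}(C_{2k}) \in \{k, k+1\}$, as required. I would not expect to pinpoint here which of the two values actually occurs, since examples of both (for instance $C_4$ with edges alternately coloured $c_1,c_2,c_1,c_2$, compared against $C_6$ with the same alternating colouring and the matching $\{v_1v_2, v_3v_6, v_4v_5\}$) already indicate that both outcomes are realisable, depending on the colouring.
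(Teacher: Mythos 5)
Your proof is correct, and it reaches the upper bound by a genuinely different route than the paper. The paper argues that any matching achieving more than $k$ alternating cycles must contain a $2$-cycle (since a cycle of length $\geq 4$ consumes at least two colored edges), shrinks the colored edge of that $2$-cycle via Proposition~\ref{Shrink}, and then does a case analysis on whether the shrunk graph is a path (immediate contradiction via Proposition~\ref{cycP}) or a smaller even cycle (induction on $k$). You instead delete one colored edge $e$ to get a spanning copy of $P_{2k}$, apply $\text{cyc}(P_{2k})=k$ from Proposition~\ref{cycP}, and observe that at most one alternating cycle of $B\cup M$ can use $e$ --- which is justified exactly as you say: $e$ has a unique color $c_i$, so it can only lie in an $(M,G_i)$-alternating cycle, and $M\cup G_i$ has maximum degree $2$, so its cycles are disjoint. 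Your argument is shorter, avoids the induction entirely, and there is no circularity since Proposition~\ref{cycP} does not depend on this lemma. What the paper's heavier shrinking argument buys is structural information that is reused afterwards: the same $2$-cycle-plus-shrink mechanism is what drives Lemma~\ref{2ndKind} and the characterization of second-kind cycles via cross-free diagonal matchings, whereas your deletion argument only yields the numerical bound. One minor caveat on your closing aside (which is outside the proof proper): your $C_6$ example with colors alternating $c_1,c_2$ is in fact also of the second kind, since $\{v_1v_2,v_3v_4,v_5v_6\}$ is a cross-free diagonal matching there; a first-kind example requires three colors, e.g.\ $C_6$ colored $c_1,c_2,c_3,c_1,c_2,c_3$, where no two vertices are diagonal. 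This does not affect the validity of the lemma's proof, which only asserts the dichotomy.
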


\begin{proof}
  Obviously, $\text{cyc}(B) \geq {k}$. So, we assume that
  $\text{cyc}(B) > {k}$. Let $M$ be an arbitrary median of
  $B$. Following the proof of Proposition~\ref{cycP}, if all
  alternating cycles in $B_M$ have length at least 4, then the number
  of alternating cycles is at most ${k}$, so there must exist at least
  one 2-cycle in $B_M$.  Let $uv$ be a colored edge in a 2-cycle:
  $\text{cyc}(B) = \text{cyc}(B\cdot\{u, v\}) + 1$
  (Proposition~\ref{Shrink}).  Moreover $B\cdot\{u, v\}$ is a path, or
  a cycle, and it is a cycle if and only if the two edges incident to
  the ends of $uv$ have the same color. If it is a path,
  Proposition~\ref{cycP} implies that $\text{cyc}(B\cdot\{u, v\}) =
  {k-1}$ and $\text{cyc}(B) = {k}$, which contradicts the assumption
  that $\text{cyc}(B) > {k}$. So $B\cdot\{u, v\}$ is a cycle and the
  edges incident to $uv$ have same color. By induction on $k$ (note
  that $\text{cyc}(C_4)=3$ and $\text{cyc}(C_2)=2$) we can find a
  median of $B\cdot\{u, v\}$ with $\text{cyc}(B\cdot\{u, v\}) = {k-1}
  + 1=k$ or $\text{cyc}(B\cdot\{u, v\}) = {k-1}$, alternating
  cycles. Hence, $\text{cyc}(B) = {k} + 1$ or $\text{cyc}(B) =
  {k}$. %% \qed
\end{proof}

%% \begin{remark}
%%   By previous theorem, if $B$ is isomorphic to $C_{2k}$, then
%%   $\text{cyc}(B)$ dependents on the edge coloring of $B$.
%% \end{remark}

Some definitions below assume that cycles of $B$ are oriented, so we
assume from now that edges of every cycle of $B$ are consistently
oriented, clockwise or counterclockwise. Fig.~\ref{cyckind} provides
an illustration.

\begin{definition}\label{def:kind-cyc}
  A cycle $C_{2k}$ of $B$ is of the {\it first kind} if
  $\text{cyc}(C_{2k}) = {k}$, and it is of the {\it second kind} if
  $\text{cyc}(C_{2k}) = {k} + 1$.
\end{definition}

\begin{definition}\label{def:signature}
  Let $C$ be a cycle of $B$. The {\it signature} of a vertex of $C$ is
  an ordered pair $(a, b)$ such that $a$ and $b$ are the colors of the
  edges incident to that vertex: $a$ is the color of the incoming edge
  and $b$ the color of the outgoing edge.  Two vertices $u$ and $v$
  are {\it diagonal} if their signatures are of the form $(a, b)$ and
  $(b, a)$.
\end{definition}

\begin{definition}\label{def:cross} 
  Let $M$ be a median of an even cycle $C$, and $uv$ and $u'v'$ be
  edges in $M$: $uv$ and $u'v'$ {\it cross} if $u, u', v, v'$ appear
  in this order along $C$. A {\it cross-free diagonal} matching for
  $C$ is a matching whose edges connect pairs of diagonal vertices and
  no two edges cross.
\end{definition}

\begin{figure}{
  \begin{center}
    \includegraphics[scale=0.85]{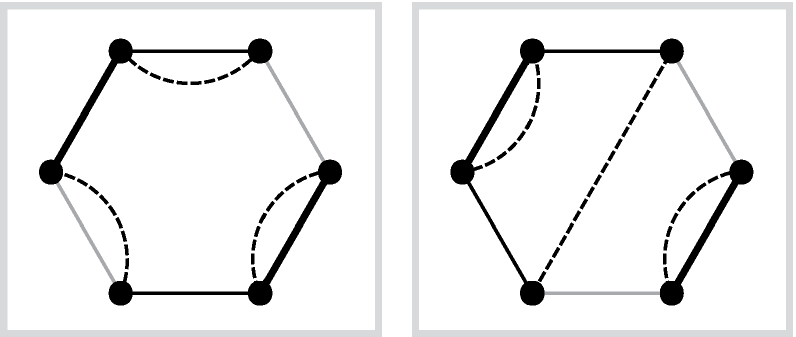}
  \caption{Dashed edges are median edges: (Left) a cycle $C_6$ of the
    first kind --- (Right) a cycle $C_6$ of the second kind; the
    matched vertices are diagonal}
       \label{cyckind}
  \end{center}
 }
\end{figure}
%%%%%%%%%

%% We first give an auxiliary lemma regarding cycles of the second kind.

\begin{lemma} \label{2ndKind}
  Let $B$ be isomorphic to an even cycle of the second kind, and $M$
  be a median of $B$. (1) Each edge in $M$ joins two diagonal
  vertices, and (2) the edges in $M$ do not cross.
\end{lemma}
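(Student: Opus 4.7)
The plan is to prove both statements by induction on $k$.

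For the base case $k = 1$, the cycle $C_2$ consists of two vertices $u, v$ joined by two parallel colored edges of distinct colors; the unique matching is $M = \{uv\}$, whose single edge connects vertices with signatures $(c_1, c_2)$ and $(c_2, c_1)$ (hence diagonal), and non-crossing is vacuous.

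For the inductive step with $k \geq 2$: since $B$ is of the second kind, $\text{cyc}(B) = k+1$. The counting argument used in the proof of Lemma~\ref{cycC} shows that if every alternating cycle in $B_M$ had length $\geq 4$ it would use at least $2(k+1) > 2k$ colored edges; hence $B_M$ must contain at least one $2$-cycle. Let $uv \in M$ be the median edge of such a $2$-cycle, so that $B$ also contains a colored edge $uv$. Proposition~\ref{Shrink} gives $\text{cyc}(B \cdot \{u,v\}) = k$. Since a path $P_{2k-2}$ has $\text{cyc} = k-1$ by Proposition~\ref{cycP}, the shrunk graph cannot be a path; by the dichotomy recalled in the proof of Lemma~\ref{cycC}, it must therefore be a cycle $C_{2k-2}$, and this happens precisely when the two edges of $B$ incident to $\{u,v\}$ other than $uv$ share a common color $c'$. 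This is exactly the condition that the signatures of $u$ and $v$ are of the form $(c', c)$ and $(c, c')$, so $u$ and $v$ are diagonal.

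The rest follows by applying the induction hypothesis to $B \cdot \{u,v\}$: this shrunk cycle has $\text{cyc} = k$ and therefore is itself of second kind, and $M \setminus \{uv\}$ is a median of it. The crucial observation is that shrinking preserves both the cyclic order and the signature of every surviving vertex: the two outer neighbors $u^-, v^+$ of the deleted pair keep their non-removed incident edge unchanged, and the new edge joining them inherits the common color $c'$ of the two removed outer edges. Consequently, a diagonal pair in the shrunk cycle is still a diagonal pair in $C_{2k}$, and two median edges that do not cross in the shrunk cycle still do not cross in $C_{2k}$. Finally, the edge $uv$ itself cannot cross any other median edge, because $u$ and $v$ are adjacent along $C_{2k}$ and no vertex lies strictly between them on one side of the cycle.

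The main technical point is formalizing how shrinking preserves signatures and cyclic order among the remaining vertices; once this is established, both parts of the lemma lift cleanly through the induction.
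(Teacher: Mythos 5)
Your proof is correct, but it is organized differently from the paper's. The paper proves part (1) first, as a standalone statement about an \emph{arbitrary} edge $uv$ of an \emph{arbitrary} median: assuming $u,v$ non-diagonal, it splits into the two cases ``equal signatures'' and ``unequal, non-diagonal signatures,'' and in each case shrinking $\{u,v\}$ yields a graph (a shorter cycle, a path, or a cycle plus a path) whose $\text{cyc}$ value is too small, contradicting the second-kind hypothesis. Only then does it prove part (2) by the induction you describe: locate a $2$-cycle, note its endpoints are diagonal by part (1), shrink, and recurse on the resulting second-kind cycle $C_{2k-2}$. You instead fold both parts into a single induction, establishing diagonality directly only for the $2$-cycle pair (forced by the path-versus-cycle dichotomy together with $\text{cyc}(B\cdot\{u,v\})=k>k-1=\text{cyc}(P_{2k-2})$) and inheriting diagonality of all remaining matched pairs from the induction hypothesis via the observation that shrinking preserves signatures and cyclic order. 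Your route is more economical --- it avoids the two-case signature analysis entirely and makes the ``signatures and cyclic order are preserved under shrinking'' invariant explicit, which the paper uses implicitly in its part (2). The paper's direct argument for part (1) buys a slightly stronger local statement (any single non-diagonal median edge is already inconsistent with the second kind, independent of the rest of the matching), which is the form reused in the converse direction of Lemma~\ref{2ndKindcyc}; your version recovers the same lemma statement, so nothing downstream breaks. Both arguments rest on the same ingredients (existence of a $2$-cycle from the proof of Lemma~\ref{cycC}, Proposition~\ref{Shrink}, and Proposition~\ref{cycP}), and your appeal to the fact that $M-\{u,v\}$ is a median of the shrunk graph is justified by the last sentence of the proof of Proposition~\ref{Shrink}.
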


\begin{proof}
  Let $B = C_{2k}$.  We first prove point (1) by contradiction. Assume
  that $uv \in M$ and that $u$ and $v$ are not diagonal. Let $(a, b)$
  and $(c, d)$ be the respective signatures of $u$ and $v$. Our
  assumption implies that $(c, d) \neq (b, a)$, and we can distinguish
  two cases: $(a, b) = (c, d)$ and $(a, b) \neq (c, d)$.
  \begin{itemize}
  \item If $(a, b) = (c, d)$, by shrinking the pair $\{u, v\}$ we
    obtain a smaller cycle $C_{2k-2}$, and by Proposition~\ref{Shrink},
    $\text{cyc}(B) = \text{cyc}(C_{2k}\cdot\{u, v\}) =
    \text{cyc}(C_{2k-2}) \leq \frac{2k-2}{2} + 1 = {k} $ which
    is a contradiction, since $B$ is of the second kind. Note that in
    this case $u$ and $v$ cannot be consecutive vertices on $B$.
  \item If $(a, b) \neq (c, d)$, by shrinking the pair $\{u, v\}$, the
    resulting graph can be either a path with $2k-2$ vertices, or a
    cycle and a path, together with $2k-2$ vertices.  In the first
    case, vertices $u$ and $v$ must be consecutive on $B$. But now
    $\text{cyc}(B) = \text{cyc}(C_{2k}\cdot\{u, v\}) + 1 =
    \text{cyc}(P_{2k-2}) + 1 = \frac{2k-2}{2} + 1 < {k} + 1$, which is
    a contradiction, since $B$ is of the second kind. In the second
    case $\text{cyc}(B) = \text{cyc}(C_{2k}\cdot\{u, v\}) =
    \text{cyc}(C_\ell) + \text{cyc}(P_m) \leq \frac{\ell}{2}+1 +
    \frac{m}{2} \leq \frac{2k-2}{2} + 1 < {k} + 1$, since paths are
    either 0- or 1-independent and $\ell+m=2k-2$ (note that in the
    latter case $u$ and $v$ cannot be consecutive). This is again a
    contradiction, as $B$ is of the second kind.
  \end{itemize}

  We now prove point (2). $B$ is of the second kind, as shown in the
  proof of Lemma~\ref{cycC}, there is a 2-cycle containing a colored
  edge $u'v'$. Moreover, by point (1), vertices $u'$ and $v'$ are
  diagonal. So $B\cdot\{u', v'\}$ is isomorphic to $C_{2k-2}$ and it
  must be of the second kind, as otherwise $\text{cyc}(C_{2k}) =
  \text{cyc}(C_{2k-2}) + 1 = \frac{2k-2}{2} + 1 = {k} < {k} +
  1$. Obviously, $u'v'$ does not cross with any median edge of $M$. By
  shrinking this pair and, by induction on the length of the cycle,
  applied to $C_{2k}\cdot\{u', v'\}$, the proof is complete. %% \qed
\end{proof}

%{\smallskip\noindent\bf I stopped here for today (Cedric, \today). We
 % should correct below to use $2k$ for even cycles instead of $k$.}

\begin{lemma} \label{2ndKindcyc}
  Let $B$ be isomorphic to  $C_{2k}$. $B$ is of the second kind if
  and only if there exists a matching $M$ of $B$ that is cross-free
  diagonal.
\end{lemma}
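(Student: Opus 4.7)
The plan is to prove the two directions separately, with the forward direction being essentially a direct appeal to Lemma~\ref{2ndKind} and the backward direction proceeding by induction on $k$.

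For the forward direction ($\Rightarrow$), suppose $B$ is of the second kind and let $M$ be any median of $B$. Since $B=C_{2k}$ has an even number of vertices, $M$ is a perfect matching. Lemma~\ref{2ndKind}(1) says every edge of $M$ joins two diagonal vertices, and Lemma~\ref{2ndKind}(2) says no two edges of $M$ cross. So $M$ itself is a cross-free diagonal matching, and we are done.

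For the backward direction ($\Leftarrow$), assume $M$ is a cross-free diagonal matching of $B=C_{2k}$ and argue by induction on $k$. The base $k=1$ is immediate: $C_2$ consists of two parallel edges and $\text{cyc}(C_2)=2=k+1$. For the inductive step, the first key observation is that any non-crossing perfect matching on the vertices of a cycle admits an \emph{ear}, i.e. a matched pair $\{u,v\}$ that is also adjacent on the cycle: pick any edge $uv\in M$, which splits the remaining vertices into two arcs; the non-crossing property forces each arc to be matched internally, so each arc has even size, and taking an edge whose arc is shortest (or recursing into it) eventually produces an arc of size $0$. Fix such an ear $uv\in M$.

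Next, analyse what happens under shrinking $\{u,v\}$. Because $u,v$ are diagonal with signatures $(a,b)$ and $(b,a)$ and adjacent on $B$, the edge they share on the cycle has color $b$, while their remaining incident edges both have color $a$ and so get identified by step~(2) of the shrinking operation. The colored edge between $u$ and $v$ is unique, so by Proposition~\ref{Shrink},
\[
\text{cyc}(B) \;=\; \text{cyc}(B\cdot\{u,v\}) + 1,
\]
and a direct inspection shows $B\cdot\{u,v\}\cong C_{2k-2}$ (the rest of the cycle is untouched). Moreover, $M-\{u,v\}$ is still a perfect matching of $C_{2k-2}$, still non-crossing (removing an edge preserves non-crossingness), and still diagonal, because the signatures of the vertices $x,y'$ neighbouring the ear are preserved by the shrinking (their relevant incident edge is replaced by another of the same color), and all other vertices are completely unaffected. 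By the induction hypothesis applied to $C_{2k-2}$ with matching $M-\{u,v\}$, the cycle $B\cdot\{u,v\}$ is of the second kind, so $\text{cyc}(B\cdot\{u,v\})=(k-1)+1=k$, and hence $\text{cyc}(B)=k+1$, showing that $B$ is of the second kind.

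The only genuinely delicate point is verifying that the shrinking at an ear indeed produces $C_{2k-2}$ equipped with a matching that remains cross-free and diagonal; the rest is bookkeeping with Propositions~\ref{Shrink} and Lemma~\ref{2ndKind}. The existence of an ear in a non-crossing perfect matching on a cycle is the other small combinatorial step, but it is routine.
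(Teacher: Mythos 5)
Your proof is correct and follows essentially the same route as the paper's: necessity via Lemma~\ref{2ndKind}, and sufficiency by locating a matched pair of consecutive diagonal vertices, shrinking it to reduce to $C_{2k-2}$ with the inherited cross-free diagonal matching, and inducting on $k$; you merely spell out the steps the paper leaves as ``easy to see'' (existence of the ear, and preservation of signatures and non-crossingness under shrinking). The only caveat --- shared with the paper's own write-up --- is that the equality $\text{cyc}(B)=\text{cyc}(B\cdot\{u,v\})+1$ invokes Proposition~\ref{Shrink} whose stated hypothesis (a \emph{median} containing $uv$) has not been verified at that point; formally one should use the unconditional lower bound $\text{cyc}(B)\geq \text{cyc}(B\cdot\{u,v\})+1$ from its proof together with the upper bound $\text{cyc}(B)\leq k+1$ from Lemma~\ref{cycC}.
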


\begin{proof}
  The necessity follows from Lemma~\ref{2ndKind}.  Now assume that
  there exists a cross-free diagonal matching $M$ on vertices of $B$.
  It is easy to see that $M$ contains at least one edge $uv$ where $u$
  and $v$ are consecutive on $B$ (note that $M$ is a perfect matching,
  since $B$ has even number of vertices). If we shrink the pair $\{u,
  v\}$, the resulting graph is $C_{2k-2}$ and the remaining edges of
  $M$ are a cross-free diagonal matching for $C_{2k-2}$.  We can
  complete the proof by induction on $k$, since $\text{cyc}(C_{2k}) =
  1 + \text{cyc}(C_{2k-2}) = 1 + \frac{2k-2}{2} + 1 = k + 1$, and the
  statement of the lemma is obviously true for $k = 1$ and $k=2$.
  %% \qed
\end{proof}

\begin{lemma} \label{WhatKind}
  Let $B$ be isomorphic to $C_{2k}$. Deciding if $B$ admits a
  cross-free diagonal matching can be done in time $O(k)$.
\end{lemma}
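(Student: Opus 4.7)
The plan is to reduce the decision problem to a local shrinking rule that can be applied greedily in linear time via a doubly linked list.

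Maintain the cycle as a doubly linked list of vertices together with their signatures, and a queue $Q$ of pairs of cyclically adjacent vertices whose signatures are diagonal. An initial sweep around the cycle computes all $2k$ signatures and populates $Q$ in $O(k)$ time. The main loop repeatedly extracts a pair $\{u,v\}$ from $Q$, deletes both endpoints from the list, and, if the two vertices that become newly adjacent by this deletion have diagonal signatures, inserts that new pair into $Q$. Each vertex enters and leaves the list at most once, so the total work is $O(k)$. The algorithm answers YES if and only if the list becomes empty when $Q$ empties.

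For correctness, the YES direction is immediate: the successive shrinkings trace out a family of chords of the original cycle which, by construction, are pairwise nested and hence form a cross-free diagonal matching. For the NO direction I would argue by induction on $k$, relying on the structural fact already established in the proof of Lemma~\ref{2ndKindcyc}: any cross-free diagonal matching of $C_{2k}$ contains at least one edge joining two consecutive vertices of the cycle. The main obstacle is that the adjacent diagonal pair $\{u,v\}$ chosen greedily need not belong to a hypothesized cross-free diagonal matching $M$. I would resolve this by a short exchange: if $\{u,u^{*}\},\{v,v^{*}\}\in M$, then substituting these two edges by $\{u,v\}$ and $\{u^{*},v^{*}\}$ yields another diagonal matching, and because $\{u,u^{*}\}$ and $\{v,v^{*}\}$ do not cross in $M$ while $u$ and $v$ are consecutive on the cycle, both $u^{*}$ and $v^{*}$ lie on the same arc delimited by $\{u,u^{*}\}$, so the new chord $\{u^{*},v^{*}\}$ is nested within that arc and does not cross any other chord of $M$. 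Hence shrinking any adjacent diagonal pair preserves the existence of a cross-free diagonal matching, and induction on $k$ then closes the argument.
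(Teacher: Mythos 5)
Your proposal is correct and is essentially the paper's approach: both greedily match cyclically adjacent diagonal pairs and shrink, justified by the claim that any cross-free diagonal matching can be modified to contain a given adjacent diagonal pair (the paper uses a stack, you use a linked list plus a queue, but both are the same $O(k)$ greedy). In fact you go slightly further than the paper by actually giving the exchange argument ($\{u,u^*\},\{v,v^*\}\to\{u,v\},\{u^*,v^*\}$) that the paper only asserts; the only loose ends are routine (discarding stale queue entries whose endpoints were already deleted, and checking from the signatures that $u^*$ and $v^*$ are themselves diagonal).
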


\begin{proof}
  Let $B = v_1v_2\ldots v_{2k}$. We rely on a simple greedy algorithm,
  which is in fact a classical algorithm for deciding if a circular
  parenthesis word is balanced; we present it for the sake of
  completeness.

  The key point was given in the proof of Lemma~\ref{2ndKindcyc}: any
  cross-free diagonal matching contains at least one pair of
  consecutive vertices that are matched. Given the circular nature of
  $B$, we can extend this property as follows: if $u$ and $v$ are
  consecutive diagonal vertices and $B$ admits a diagonal cross-free
  matching, then there exists a matching where $u$ and $v$ are
  matched. This leads immediately to a greedy algorithm that matches
  such vertices as soon as they are visited, using a simple stack data
  structure:
 
  %% \begin{center}
  %%   \scalebox{1}{ \fbox{\begin{minipage}{0.98\linewidth}
  %%         \begin{enumerate}
  %%         \item $M=\emptyset$ ($v_1,v_2,\ldots,v_{2k}$ are not matched)
  %%         \item For $j = 1$ to $2k$
  %%           \begin{enumerate}
  %%           \item if there exists $i$ ($1 \leq i < j$), such that
  %%             $v_i$ and $v_j$ are diagonal, $v_i$ is not matched, and
  %%             $i$ is the maximum number with this property, then add
  %%             $\{v_i,v_j\}$ to $M$ (match $v_i$ and $v_j$).
  %%           \end{enumerate}
  %%         \item If all vertices are matched, $B$ has a cross-free
  %%           diagonal matching, otherwise it does not.
  %%         \end{enumerate}
  %%       \end{minipage}
  %%   } }
  %% \end{center}
  
  %% The time complexity of this algorithm is $O(k^2)$: we iterate the
  %% loop $2k$ times and for each $j$ in the loop we check previous
  %% vertices to find the proper $i$. One can easily see that this can be
  %% done in linear time as follows:
 
  \begin{center}
    \scalebox{1}{ \fbox{\begin{minipage}{0.9\linewidth}
          \begin{enumerate}
          \item Let $M=\emptyset$ be an empty matching.
          \item Let $S$ be an empty stack.
          \item For $j = 1$ to $2k$
            \begin{enumerate}
            \item if the top element $v_i$ of $S$ is diagonal with
              $v_j$, pop it from the stack $S$ and add $\{v_i,v_j\}$
              to $M$.
            \item else, push $v_j$ on $S$.
            \end{enumerate}
          \item If $S$ is empty, $B$ admits a cross-free diagonal
            matching, given by $M$, otherwise it does not admit one.
          \end{enumerate}
        \end{minipage}
    } }
  \end{center}

  The time complexity of this algorithm is obviously linear in $k$.
  %% \qed
\end{proof}

\begin{proposition}\label{cycles}
  If $B$ is isomorphic to an even cycle of size $k$ ($k\geq 2$), then
  computing $\text{cyc}(C)$ can be done in time $O(k)$.
\end{proposition}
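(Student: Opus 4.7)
The plan is to observe that this proposition is essentially a direct corollary of the three preceding lemmas, so the proof will be short.

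First, I would invoke Lemma~\ref{cycC}, which tells us that for $B$ isomorphic to an even cycle $C_k$ (writing $k=2k'$), the value $\text{cyc}(B)$ is either $k'$ or $k'+1$. Thus computing $\text{cyc}(B)$ reduces to a single binary decision: is $B$ of the first kind or of the second kind (Definition~\ref{def:kind-cyc})?

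Next, I would use Lemma~\ref{2ndKindcyc} to reformulate this binary decision as a purely structural/combinatorial question: $B$ is of the second kind if and only if $B$ admits a cross-free diagonal matching. This step requires no computation; it is just a definitional replacement.

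Finally, Lemma~\ref{WhatKind} gives an $O(k)$ greedy algorithm (stack-based, analogous to balanced-parenthesis checking) that decides whether $B$ admits a cross-free diagonal matching. Running this algorithm and then returning $k/2+1$ if the answer is yes and $k/2$ if the answer is no gives the value of $\text{cyc}(B)$ in time $O(k)$. There is no real obstacle here; the three preceding lemmas have done all of the work, and the only task in the proof is to chain them together and note that the total running time is dominated by the $O(k)$ cost of the cross-free diagonal matching test.
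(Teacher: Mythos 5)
Your proof is correct and follows exactly the same route as the paper, which simply states that the proposition is an immediate consequence of Lemma~\ref{cycC}, Lemma~\ref{2ndKindcyc}, and Lemma~\ref{WhatKind}. Your write-up just makes the chaining of these three lemmas explicit.
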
 

\begin{proof}
  Immediate consequence of Lemma~\ref{cycC}, Lemma~\ref{2ndKindcyc},
  and Lemma~\ref{WhatKind}.%% \qed
\end{proof}

%%%%%%%%%%%%%%%%%%%%%%%%%%%%%%%%%%%%%%%%%%%%%%%%%
\subsection{Proof of Theorem~\ref{thm:main}}

We now have all the elements to prove our main result,
Theorem~\ref{thm:main}. We first prove that computing a median of a
breakpoint graph of maximum degree two is tractable.

\begin{lemma}\label{linkage}
  If $B$ has maximum degree $2$, then there exists a median of $B$ such
  that every odd connected component of $B$ is connected by median
  edges to exactly one other odd connected component.
\end{lemma}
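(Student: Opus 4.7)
The plan is to derive Lemma~\ref{linkage} as a direct consequence of Proposition~\ref{themedian}. First, I would invoke Proposition~\ref{themedian} on the breakpoint graph $B$ (which has maximum degree $2$ and hence consists only of paths and cycles, noting that cycles in $B$ must be even since consecutive edges along a cycle alternate in color) to obtain a median $M$ of $B$ in which every even connected component has no crossing edge, and every odd path has exactly one crossing edge.

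Next, I would examine what the crossing edges in this median look like globally. By Definition~\ref{def:crossing-indep}, a crossing edge for a subgraph $H$ has its two endpoints in two distinct connected components of $B$. If some crossing edge $e$ had an endpoint in an even component $H$, then $e$ would be an $H$-crossing edge, contradicting the fact that $H$ has no crossing edge in $M$. Therefore, both endpoints of every crossing edge in $M$ lie in odd connected components, i.e., in odd paths.

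Combining the two facts -- each odd path has exactly one crossing edge, and each crossing edge has both endpoints in distinct odd paths -- the crossing edges define a matching on the set of odd paths: each odd path is joined by its unique crossing edge to exactly one other odd path. A simple double count also confirms this is consistent: counting (odd path, incident crossing edge) pairs in two ways gives (number of odd paths) $= 2 \cdot$ (number of crossing edges), so the number of odd paths is automatically even and every odd path is paired with exactly one other.

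The main (and essentially only) potential obstacle is to make sure nothing degenerate can occur: that the "other" endpoint of an odd path's crossing edge does not loop back to the same path, and that no even component sneaks a crossing edge into the picture. The first is ruled out by the definition of crossing edge (distinct endpoints lie in distinct components), and the second by the property of $M$ inherited from Proposition~\ref{themedian}. Once these are observed, no further computation is required and the lemma follows.
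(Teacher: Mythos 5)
Your argument works for odd paths, but it rests on a false premise: that every cycle in $B$ is even. With three colours a cycle need not alternate between two colours; it only needs consecutive edges to differ in colour, so odd cycles do occur (e.g.\ a triangle whose three edges come from $G_1$, $G_2$ and $G_3$ respectively). An odd cycle is an odd connected component of $B$, and the lemma claims it too is joined to exactly one other odd component. Proposition~\ref{themedian} is silent about odd cycles --- it controls only even components (no crossing edge) and odd paths (exactly one crossing edge) --- so your derivation gives no bound at all on the number of crossing edges leaving an odd cycle; a priori it could have $3, 5, \dots$ of them, reaching several other odd components, and your double count then fails.

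This is exactly the point where the paper does extra work. After invoking Proposition~\ref{themedian} as you do, it notes that each odd component has at least one crossing edge (parity), picks such an edge $e$ joining odd components $H$ and $H'$ (it must go to another odd component, since even components have no crossing edges), and shrinks $e$: the result $(H\cup H')\cdot e$ has an even number of vertices in each component, hence is $0$-independent, and Proposition~\ref{Shrink} with $k=0$ gives $\text{cyc}((H\cup H')\cdot e)=\text{cyc}(H\cup H')$. This lets one rematch $H\cup H'$ so that $e$ is its only crossing edge, and repeating over the (evenly many) odd components yields the pairing. To repair your proof you would either need to add this shrinking step for odd components in general, or strengthen Proposition~\ref{themedian} to say something about odd cycles --- as written, restricting attention to odd paths leaves a genuine gap.
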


\begin{proof}
  Let $M$ be a median as described in the proof of
  Proposition~\ref{themedian}: every even connected component has no
  crossing edge and each odd path has exactly one crossing
  edge. Moreover, odd cycles have at least one crossing edge.

  Let $H$ be an odd connected component and $e$ one of its crossing
  edges, connecting $H$ to another odd component $H'$. Shrinking $e$
  results into $(H \cup H')\cdot e$ which is a set of even components
  and it is then $0$-independent. Moreover, as $H$ and $H'$ were
  distinct connected components of $B$, from Proposition~\ref{Shrink}
  (with $k=0$), $\text{cyc}((H \cup H')\cdot e)=\text{cyc}(H \cup
  H')$. 

  Repeating this argument for other odd components and the fact that
  the number of odd components is even (because the number of vertices
  in the breakpoint graph is even) completes the proof. %% \qed
\end{proof}

\begin{lemma}\label{pair}
  If $B$ has maximum degree $2$ and consists of two odd connected
  components $H_1$ and $H_2$, of respective sizes $k_1$ and $k_2$,
  then computing a median of $B$ can be done in time $O(k_1k_2(k_1+k_2))$.
\end{lemma}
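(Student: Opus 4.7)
The plan is to exploit Lemma~\ref{linkage} to reduce the problem to independently analyzing even components. Since $B$ has exactly two odd connected components $H_1$ and $H_2$, that lemma guarantees an optimal median contains at least one median edge joining some vertex $u_1^{\star} \in V(H_1)$ to some vertex $u_2^{\star} \in V(H_2)$. The proof of Lemma~\ref{linkage} further shows that shrinking such a correctly chosen crossing edge transforms $B$ into a graph whose connected components are all even and which is $0$-independent. Because $H_1$ and $H_2$ are distinct components of $B$, there are no colored edges between $u_1^{\star}$ and $u_2^{\star}$, so by Proposition~\ref{Shrink} we obtain $\text{cyc}(B) = \text{cyc}(B \cdot \{u_1^{\star}, u_2^{\star}\})$.

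Building on this, I would enumerate all $k_1 k_2$ candidate bridges. For each pair $(u_1,u_2) \in V(H_1) \times V(H_2)$, form the shrunken graph $B' = B \cdot \{u_1,u_2\}$, decompose it into connected components in linear time, and, provided every component is even, compute $\text{cyc}(B')$ by summing contributions using Proposition~\ref{cycP} on paths and Proposition~\ref{cycles} on even cycles. The maximum such value is $\text{cyc}(B)$, and the associated median is recovered by combining the bridge edge $u_1 u_2$ with the per-component optimal matchings explicitly produced in those propositions. Since each pair costs $O(k_1 + k_2)$ for shrinking, component decomposition, and the summed cycle-count computation, the total running time is $O(k_1 k_2 (k_1 + k_2))$, as claimed.

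Correctness hinges on the observation that at least one enumerated pair, namely $(u_1^{\star},u_2^{\star})$, yields a shrunken graph whose components are all even and whose cycle count matches the true optimum; every other pair either yields a suboptimal but still feasible median, or is skipped when odd components appear, so neither case causes the algorithm to overshoot the optimum. The main obstacle I anticipate lies not in the algorithmic skeleton but in justifying that restricting attention to single-bridge configurations loses nothing --- specifically, extracting from the proof of Lemma~\ref{linkage} that one correctly placed crossing edge, together with per-component matchings inside a $0$-independent remainder, already realises the optimum. Once this is pinned down, the per-pair computation reduces to a routine invocation of the earlier propositions.
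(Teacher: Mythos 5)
Your proposal is essentially the paper's own proof: both argue (by parity / Lemma~\ref{linkage}) that some median contains a crossing edge between $H_1$ and $H_2$, enumerate all $k_1k_2$ candidate pairs, shrink each, solve the resulting maximum-degree-$2$ graph with even components in linear time via Propositions~\ref{themedian}, \ref{cycP} and~\ref{cycles}, and keep the best, giving $O(k_1k_2(k_1+k_2))$. The only difference is cosmetic --- you explicitly guard against odd components appearing after a shrink, whereas the paper simply asserts the shrink yields one or two even components --- so the two arguments stand or fall together.
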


\begin{proof}
    For parity reasons, a median $M$ contains at least one edge $e$
    between $H_1$ and $H_2$ ($e$ is a $H_1$-crossing edge).%%  If one of
    %% $H_1$ and $H_2$ is a path, then, from Proposition~\ref{cycP}, we
    %% can assume $e$ is the only crossing edge of $M$ is connected to
    %% one of its two terminal vertices.  In either case, by 
    By shrinking $e$ we obtain either one even connected component or
    two even connected components, and, from
    Proposition~\ref{themedian}, we can compute a median for each
    connected component independently. This computation requires
    linear time (Propositions~\ref{cycP} and~\ref{cycles}). There are
    at most $k_1k_2$ possible candidates for
    $e$. Hence computing a median of $B$ is tractable in time
    $O(k_1k_2(k_1+k_2))$. %% \qed
\end{proof}

%% \begin{figure}[H]{
%% \begin{center}
%%  \includegraphics[scale=0.25]{comps.pdf}
%%  \caption{{\bf (e)} Each even component has no crossing edge, {\bf
%%      (o)} For each odd component $H_1$, there is exactly one other odd
%%    component $H_2$, such that $(H_1 \cup H_2)$ has no crossing edge.}
%%  \label{comps}
%%  \end{center}
%%  }\end{figure}

\begin{proposition}\label{odds}
  If $B$ is a breakpoint graph with $2n$ vertices with maximum degree
  $2$, then computing a median of $B$ can be done in $O(n^3)$.
\end{proposition}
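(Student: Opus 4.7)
}

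The plan is to combine three ingredients: the structural decomposition from Lemma~\ref{linkage}, the per-component median computations from Propositions~\ref{cycP}, \ref{cycles} and Lemma~\ref{pair}, and a maximum weight perfect matching at the top level to decide how to pair up odd components.

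First, I would compute the connected components of $B$ in time $O(n)$ and split them into even components (even paths and even cycles) and odd components (odd paths and odd cycles). By Lemma~\ref{linkage} there exists a median $M$ such that every even component has no crossing edge and every odd component is connected by median edges to exactly one other odd component. For each even component, the number of alternating cycles it contributes to $M$ is determined independently and can be computed in linear time by Proposition~\ref{cycP} (for paths) or Proposition~\ref{cycles} (for even cycles). Summed over all even components this costs $O(n)$.

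Next, let $H_1,\ldots,H_{2p}$ be the odd components, with $|V(H_i)|=k_i$, so that $\sum_i k_i \leq 2n$ and $p \leq n$. For every pair $\{i,j\}$ with $i<j$, apply Lemma~\ref{pair} to the subgraph $H_i\cup H_j$, obtaining in time $O(k_ik_j(k_i+k_j))$ the maximum number $w_{ij}$ of alternating cycles that can be realised using median edges incident only to $V(H_i)\cup V(H_j)$. The total cost of this enumeration is
\begin{equation*}
  \sum_{i<j} O(k_ik_j(k_i+k_j)) \;\leq\; O\Bigl(\bigl(\textstyle\sum_i k_i\bigr)^3\Bigr) \;=\; O(n^3).
\end{equation*}
By Lemma~\ref{linkage} the global objective function, restricted to the odd components, decomposes additively as $\sum_{\{i,j\}\in \Pi} w_{ij}$, where $\Pi$ is the pairing of the $H_i$ induced by the crossing edges of $M$. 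Hence the remaining task is to find a perfect matching on the $2p$ odd components that maximises $\sum_{\{i,j\}\in\Pi} w_{ij}$.

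Finally, I would solve this by a maximum weight perfect matching on the complete graph on $2p$ vertices with edge weights $w_{ij}$, using Edmonds' weighted blossom algorithm, which runs in $O(p^3) \subseteq O(n^3)$. Recovering the median then amounts to plugging, for every chosen pair $\{i,j\}\in\Pi$, the corresponding witness matching produced by Lemma~\ref{pair}, together with the per-component median edges for each even component. The overall running time is $O(n^3)$.

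The main obstacle is conceptual rather than computational: I must be sure that Lemma~\ref{linkage} really lets me reduce to an independent pairwise optimisation, so that $w_{ij}$ computed \emph{locally} on $H_i\cup H_j$ coincides with the contribution of that pair in some \emph{globally} optimal median. This is exactly what Lemma~\ref{linkage} (together with $0$-independence of the even components coming from Proposition~\ref{indep}) guarantees, since once the pairing on odd components is fixed, no median edge crosses between different pairs or between pairs and even components, so the contributions of the pieces genuinely add up.
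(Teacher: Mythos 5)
Your proposal is correct and follows essentially the same route as the paper: split off the even components (handled independently via Propositions~\ref{cycP} and~\ref{cycles}), build the complete weighted graph on the odd components with pairwise weights from Lemma~\ref{pair} in total time $O(n^3)$, and finish with a maximum weight matching via Edmonds' algorithm, justified by Lemma~\ref{linkage}. No substantive differences to report.
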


\begin{proof}
    We first consider the case where $B$ contains only odd connected
    components. We define a complete edge-weighted graph $K_B$ as
    follows:
   \begin{enumerate}
    \item each connected component $C$ defines a vertex $v_C$;
    \item each edge $\{v_C,v_D\}$ has weight  $\text{cyc}(C \cup D)$
    \end{enumerate}
    By Lemma~\ref{pair}, $K_B$ is computable in polynomial time. We
    claim it is computable in $O(n^3)$.  Suppose $B$ has $t$
    components and $n_1, \ldots, n_t$ are the number of vertices in
    each component. So we have $n_1 + \ldots +n_t = 2n$. The time to
    construct $K_B$ is of order
    
    \begin{align*}
      \sum\limits_{i < j} n_i\cdot n_j\cdot (n_i + n_j)= \sum\limits_{i < j} n_i^2\cdot n_j + n_i\cdot n_j^2 \\ = 
      \frac{1}{3}((2n)^3 - (n_1^3 + \ldots + n_t^3))
      % \leq \frac{1}{2}(4n^2 - t\cdot (2n/t)^2) 
      \leq \frac{8}{3}n^3.
    \end{align*}
    
    Finally, by Lemma~\ref{linkage} we only need to find a maximum
    weight matching for $K_B$, which can be done in $O(n^{3})$ by
    using Edmonds's algorithm~\cite{Edmonds}.  %% \qed

    If the breakpoint graph $B$ has maximum degree $2$, its connected
    components are paths or cycles. From Proposition~\ref{themedian}
    and Proposition \ref{cycles} we can find the median edges for even
    components independently. Finally for odd components we find the
    median edges as described in the first part of the proof. %% \qed
\end{proof}

\noindent{\em Proof of Theorem~\ref{thm:main}.}  We now assume that
$B$ has maximum degree $3$.  

The main idea is to consider all possibilities for matching the
vertices of degree $3$ of $B$. A vertex $u$ of degree $3$ can be matched
in two ways.
\begin{itemize}
\item If it is matched to another vertex of degree $3$, by shrinking
  these two vertices we obtain a smaller graph with fewer vertices of
  degree $3$, and, from Proposition~\ref{Shrink}, we know precisely
  the number of alternating cycles (here $2$-cycles) lost in the
  shrinking process, given by the number of genome edges between the
  two shrinked vertices.
\item If it is matched to a vertex of degree less than 3, then one of
  the edges incident to $u$ is not in any alternating cycle, and we
  can remove this edge and transform $u$ into a vertex of degree $2$
  (Fig.~\ref{3edge}).
\end{itemize}

\begin{figure} 
  \begin{center}
    \includegraphics[scale=.35]{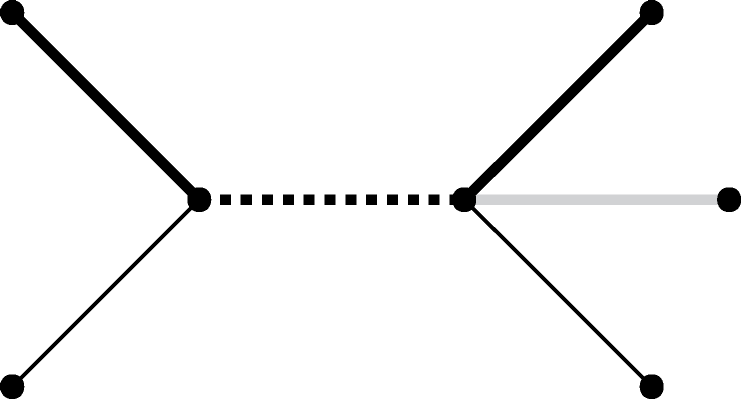}
    \caption{The dashed edge is a median edge. The gray edge cannot be
      in any alternating cycle.}
    \label{3edge}
  \end{center}
\end{figure}

Now for each $i$, $0 \leq i \leq \ell$, we can select $2i$ vertices
among all $m$ vertices of degree $3$ (there are $O(m^{2i})$
possibilities), compute an arbitrary perfect matching on these $2i$
vertices, and, for each each remaining vertex of degree $3$, remove an
edge incident to this vertex (there are $O(3^{m-{2i}})$
possibilities). The resulting breakpoint graph $B'$ is of maximum
degree $2$ and a median can be computed in time $O(n^3)$, whose number
of alternating cycles needs only to be augmented by the number of
edges between matched vertices of degree $3$ in $B$. 

The number of all such configurations is in
$O((\ell+1) \cdot (m^{2\ell}+1) \cdot 3^m)$ (the term $+1$ is needed to
account for the case $m=0$), which leads to the stated complexity.
\section{Conclusion}

In this work, we characterized a large class of tractable instances
for the DCJ median problem (with circular median and mixed
genomes). In fact, we showed that only the vertices of degree $3$ make
the problem intractable. Also, by removing $k$ edges from the
breakpoint graph and decreasing its maximum degree, cost of its median
is not bigger than $k$ plus the cost of the main median (i.e. the
current cost $ - k$ is a lower bound for the cost of the main median).
Finally, we showed there is an FPT algorithm for the DCJ median
problem, if there exists a median such that the number of its edges
connecting two vertices of degree $3$ is bounded.

Our work also shows that the multiplicity of solutions (i.e. medians)
is likely to happen when dealing with breakpoint graphs with long
paths or even cycles, as we showed that such components can admit
several optimal medians. Hence, our results, as they stand now, are of
interest more for computing the score of a median than for computing
actual medians that can be seen as realistic ancestral
genomes. However, the problem of uniform sampling of optimal median is
worth being explored, even in the simpler setting of breakpoint graphs
of maximum degree $2$ in a first time.

From a theoretical point of view, our work raises several
questions. First, it leaves open the possibility that the DCJ median
problem is FPT. Using the number of vertices of degree $3$ as a
parameter is a a natural approach, although this seems to be a
difficult question to address. The next obvious problem is to extend
our approach to the case of a mixed or linear median. This would
require to better understand the combinatorics of odd paths in the
breakpoint graphs in relation to medians. The simpler problem to find
an optimal way to remove exactly one edge from each circular
chromosome of a circular median while minimizing the number of
destroyed alternating cycles is also open. Extending our results to
the related \emph{DCJ halving problem}~\cite{Tannier2009} is also a
natural question.

Another interesting question is about expanding the breakpoint
distance toward the DCJ distance: for two genomes $G_1$ and $G_2$ on
$n$ genes, their breakpoint distance is equal to
$$d_{\text{BP}}(G_1, G_2) = n - a(G_1, G_2) - \frac{1}{2}e(G_1,
G_2).$$ The parameters $a(G_1, G_2)$ and $e(G_1, G_2)$ are also equal
the number of 2-cycles and 1-paths ($P_1$) in the breakpoint graph
$B(G_1, G_2)$, respectively.  The DCJ distance of these genomes is:
$$d_{\text{DCJ}}(G_1, G_2) = n - c(G_1, G_2) - \frac{p(G_1,
  G_2)}{2},$$ where $c(G_1, G_2)$ and $p(G_1, G_2)$ are the number of
(even) cycles and odd paths in the $B(G_1, G_2)$, respectively. This
motivates us to define a dissimilarity function as follows:
$$d_{(i, j)}(G_1, G_2) = n - c_i(G_1, G_2) - \frac{1}{2}p_j(G_1,
G_2),$$ where $c_i(G_1, G_2)$ is the number of (even) cycles with at
most $2i$ vertices, and $p_j(G_1, G_2)$ is the number of odd paths
with at most $2j-1$ vertices.  By considering this dissimilarity
measure, the median problem is tractable when $i = j = 1$, since
$d_{(1, 1)} = d_{\text{BP}}$. By taking $i = j = \infty$ we have
$d_{(\infty, \infty)} = d_{\text{DCJ}}$, and the median problem would
be intractable. A natural question is then to understand for which
values of $i$ and/or $j$ the median problem is tractable, or FPT.

\bigskip{\noindent\bf Acknowledgments.}
C.C. and L.S. are supported by NSERC Discovery Grants.

%%%%%%%%%%%%%%%%%%%%%%%%%%%%%%%%%%%%%%%%%%%%%%%%%%%%%%%%%%

\end{document}